\newsavebox{\tablebox}
\newcommand{\beq}{\begin{equation}}
\newcommand{\enq}{\end{equation}}
\newcommand{\ben}{\begin{eqnarray}}
\newcommand{\enn}{\end{eqnarray}}
\newcommand{\bei}{\begin{itemize}}
\newcommand{\eni}{\end{itemize}}
\newcommand{\bm}[1]{\mbox{\boldmath{$#1$}}}
\newtheorem{theorem}{Theorem}
\newtheorem{lemma}{{Lemma}}
\newtheorem{proof}{Proof}
\newtheorem{remark}{{Remark}}
\begin{document}
\title{Joint Optimization of User Association, Subchannel Allocation, and Power Allocation  in Multi-cell Multi-association OFDMA Heterogeneous Networks}

\author{Feng~Wang, Wen~Chen,~\IEEEmembership{Senior~Member,~IEEE}, Hongying Tang, and \\
  Qingqing Wu,~\IEEEmembership{Student~Member,~IEEE}

        % <-this % stops a space
%\thanks{Manuscript received October 7, 2011; accepted November 17th, 2011. The associate editor coordinating the review of this paper
%and approving it for publication was H. Viswanathan.}
\thanks{F.~Wang and W.~Chen are with Department of Electronic Engineering, Shanghai Jiao Tong University,
China, e-mail:\{fengwang217@gmail.com; wenchen@sjtu.edu.cn\}. H. Tang is with Science and Technology on Microsystem Laboratory,
Shanghai Institute of Microsystems and Information Technology, Chinese Academy of Sciences, Shanghai, China, e-mail:tanghy@mail.sim.ac.cn. Q. Wu is with  Department of Electronic
Engineering, Shanghai Jiao Tong University, China, and also with the School of ECE, Georgia Institute of Technology, Atlanta, GA, USA, email: wu.qq1010@gmail.com.

This work is supported in part by NSFC under Grant 61671294, in part by Guangxi NSF key project under Grant
2015GXNSFDA139037, and in part by Shanghai Key Fundamental Project under Grant 16JC1402900. Part of this work was presented in 2016 IEEE 83rd Vehicular Technology Conference.
}

%\thanks{This work is supported by NSF China \#60972031, by national 973 project \#2012CB316106 and
%\#2009CB824904, by national huge special project \#2012ZX03004004,
%by national key laboratory project \#ISN11-01, by Huawei Funding
%\#YBWL2010KJ013, and by Foundation of GuangXi
%University~\#XGL090033.}
}
%\markboth{IEEE Wireless Communications Letters ,~Vol.~1, No.~1, January~2012}%

%{Shell \MakeLowercase{\textit{et al.}}: Bare Demo of IEEEtran.cls for Journals}

\maketitle

\begin{abstract}
Heterogeneous network is a novel network architecture proposed in Long-Term-Evolution~(LTE), which highly increases the capacity and coverage compared with the conventional networks. However, in order to provide the
best services, appropriate resource management must be applied. In this paper, we consider the joint
optimization problem of user association, subchannel allocation, and power allocation for downlink transmission in Multi-cell Multi-association Orthogonal Frequency Division Multiple Access (OFDMA) heterogeneous networks. To solve the optimization problem, we first divide it into two subproblems: 1) user association and subchannel allocation for fixed power allocation; 2) power allocation for fixed user association and subchannel allocation. Subsequently, we obtain a locally optimal solution for the joint optimization problem by solving these two subproblems alternately. For the first subproblem, we derive the globally optimal solution based on graph theory. For the second subproblem, we obtain a Karush-Kuhn-Tucker (KKT) optimal solution by a low complexity algorithm based on the difference of two convex functions approximation (DCA) method. In addition,  the multi-antenna receiver case and the proportional fairness case are also discussed. Simulation results demonstrate that the proposed algorithms can significantly enhance the overall network throughput.
\end{abstract}
\begin{IEEEkeywords}
Heterogeneous networks; multi-association; user-association, subchannel allocation; power allocation.
\end{IEEEkeywords}
\IEEEpeerreviewmaketitle

\section{Introduction}
\IEEEPARstart{A}{s} a novel candidate technology in 5th generation (5G) wireless networks, heterogeneous network is proposed to increase network throughput and coverage, and reduce energy consumption. In the homogeneous networks, the transmission
power and coverage of each BS is similar. Nevertheless, macro and micro base stations~(BSs) with different transmission power and processing capability are deployed in heterogeneous networks to meet various communication demands \cite{86de1}. The micro BSs include picocell BSs, femtocell BSs \cite{86de4} and relays. Picocell BSs and femtocell BSs are connected to the network by wired backhaul and relays are connected to the network by wireless backhaul. A heterogeneous architecture brings in a rich topology, but the deployment of different low power BSs over existing macro BSs coverage causes severe interference, which poses new challenges on interference management and resource allocation.

Different kinds of user association schemes have been discussed for heterogeneous networks. To balance the traffic load between the BSs, range-expansion based scheme is proposed in \cite{five,86de7}, where a bias factor is used to balance the load in macro BSs and micro BSs. In \cite{2A}, WU \emph{et al}. propose a novel user association model with dual connectivity and constrained backhaul. In \cite{2B}, Siddique \emph{et al}. propose a channel-access-aware user association scheme  to enhance the spectral efficiency and achieve traffic load balancing. A Voronoi-based user association scheme is proposed in \cite{2C} to maximize the number of admitted users. Subchannel allocation is another hot issue in  Orthogonal Frequency Division Multiple Access~(OFDMA) heterogenous networks. In general cases, each subchannel will be allocated to the  user equipments (UEs) that has the best channel condition. Various methods are investigated for subchannel allocation, such as the worst user first (WUF) Greedy algorithm \cite{six} and the proportional fair method  \cite{seven}. A few recent literatures investigate joint optimization of user association  and subchannel allocation in multi-cell OFDMA networks, such as \cite{HetNets} and \cite{84}. In \cite{HetNets}, the user association and subchannel allocation problem is solved separately. In \cite{84}, the authors propose  an iterative method which only achieves a suboptimal solution. To our best knowledge, the optimal solution for the joint user association and subchannel allocation problem in multi-cell OFDMA networks  has not been achieved yet.

As the spectrum becomes rare and expensive, the co-channel deployment (CCD)  scheme, where all BSs operate on the full set of subchannels, are highly desirable \cite{84de6}. Some works focus on the resource allocation when the CCD is considered. In \cite{37}, a distributed power allocation method based on iterative water-filling (IW) is presented. In \cite{84de7}, Perez \emph{et al}. propose a dynamic algorithm to jointly  allocate subchannel and power to mitigate inter-cell interference. In \cite{1B}, Tabassum \emph{et al}. investigate the subchannel and power allocation in high signal-to-interference-noise-ratio~(SINR) regime. In \cite{1A}, Kim \emph{et al}. propose the joint subchannel allocation and power control based on polyblock outer approximation~(JSPPA) algorithm to get the optimal solution of the joint subchannel and power allocation problem. However, its computational complexity   increases exponentially  with the number of UEs and subchannels.
In addition, all the literatures above assume that one user can only be connected to one BS  in each time slot.  Recently,   Ghimire \emph{et al}.  assume that one user can be connected to multiple BSs, which is referred to as multi-association \cite{eight}. Intuitively, this will yield higher throughput since it allows for higher flexibility.  However, the authors adopt the exhaustive search for the user association and subchannel allocation problem in CCD networks.

In this paper, we also consider multi-association  scenario. In particular, we assume that one user can be connected to different BSs on different subchannels.  We maximize the weighted sum-rate for downlink transmission in multi-cell OFDMA heterogeneous networks. Our main contributions are summarized as follows:
\begin{itemize}
\item We develop a more general mathematical model considering both multi-association and CCD.  To the best of our knowledge, this is the first work that jointly  optimizes  user association, subchannel allocation, and power allocation for our considered system.

\item To solve this optimization problem, we first divide it into two subproblems. The first one is joint optimization of user association and subchannel allocation for a fixed power allocation. We transform it into an equivalent bipartite matching problem \cite{twentyone} and obtain the globally optimal solution by Hungarian algorithm \cite{twentytwo}. As far as we know, this is the first optimal solution for the joint problem in multi-cell networks.

\item The second subproblem is power allocation for  fixed  user association and subchannel allocation, which is transformed into a series of convex problems  based on the difference of two convex functions approximation (DCA)  method.  Then a low complexity algorithm is proposed to solve these problems. Moreover, we prove that the DCA method converges to a Karush-Kuhn-Tucker (KKT) optimal point under some mild conditions.

\item Based on that, we obtain a locally optimal solution for the joint optimization problem by an alternating optimization method. Simulation results show that our algorithm can achieve significant performance gain compared with the existing algorithms. We also find that the iteration number of the conventional Lagrange Dual method\cite{liuyong,kwan} is nearly 9 times larger than that of our low complexity method.

\item We also extend the problem into the multi-antenna receiver case and the proportional fairness case, respectively. Our algorithm  performs well when the UEs apply multi-antenna receivers, and it helps achieve a good tradeoff between throughput and fairness when considering proportional fairness.
\end{itemize}

The remainder of this paper is organized as follows. In Section II, we establish the multi-cell OFDMA heterogeneous networks model and formulate the weighted sum-rate maximization problem. In Section III, a joint optimization method is proposed to obtain the locally optimal solution by dividing the maximization problem into two subproblems, where the multi-antenna receiver case and the proportional fairness case are also discussed.   Simulation results are provided in Section IV.  Finally, the paper is concluded in Section V.

\section{System Description and Problem Formulation}
\subsection{System Description}

We consider an OFDMA multi-cell heterogeneous network in the downlink transmission as shown in Fig. 1. The total number of cells in the network is $N_c$. Each cell is associated with one macro BS in the center and $N_m-1$ uniformly deployed micro BSs. There are $N_u$ UEs  uniformly distributed in the network.  The total transmission time and frequency   band
are equally divided into multiple time slots and multiple subchannels, respectively. Each subchannel consists of several consecutive subcarriers. The channel is modeled to capture both the large-scale attenuation and the small-scale fading. Four basic assumptions are made in our system model as follows:

\begin{figure}[htp]
    \centering
    \includegraphics[width=3.5in]{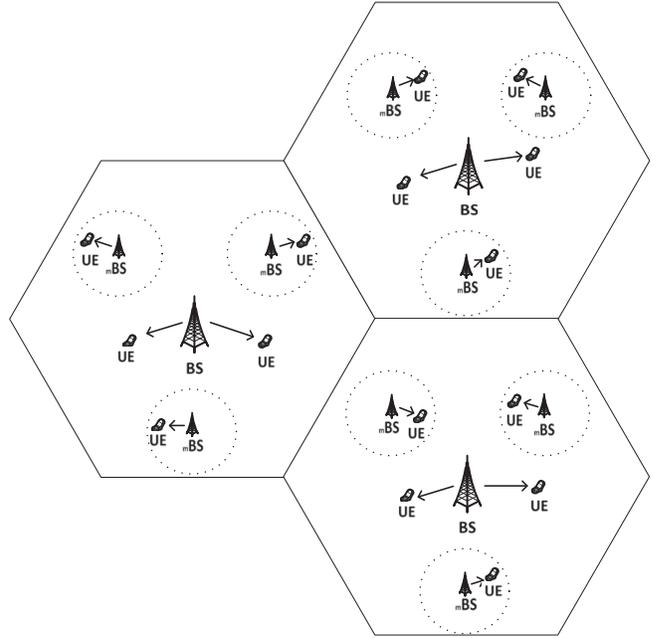}
    \caption{The system model of a downlink heterogeneous network.}\label{pareto}
\end{figure}

1) Back-haul connectivity: There is a centralized controller which is associated to all the BSs by optical fiber. Perfect channel state information (CSI) is available at the centralized controller.  The CSI can be collected by the following way: each BS broadcasts the pilot signal to all the UEs. Next, each UE  estimates the CSI and  sends it to the related BS via a feedback channel. Then all the BSs send the CSI to the centralized controller by optical fiber. Due to the high speed data exchange between the centralized controller and the BSs, the time cost of CSI overhead is negligible.

2) Co-channel deployment (CCD): All BSs (both macro BSs and micro BSs) operate on the full set of subchannels.  Although CCD may lead to severe interference, its system performance can outperform the spectrum splitting method in our system model,  as shown in the simulation results.

3) Multi-association: A UE can be served by multiple BSs in each time slot. Conventionally, a UE can only associate with one BS in each time slot \cite{101,86de12}. To further improve the network throughput, we assume that each UE can be served by different BSs on different subchannels.

4) Channel fading: The small-scale fading is assumed to be frequency selective and independent among different subchannels, while the channel in each subchannel is assumed to be flat fading. The channel coefficients remains unchanged within each time slot.

Denote $h_{i,j,k}^n$ as the channel coefficients from the \emph{j}th BS in the \emph{i}th cell to the \emph{k}th UE on the \emph{n}th subchannel,  where $j=1$  for macro BSs, and $j>1$ for micro BSs. Denote $N_r$ as the total number of subchannels. The total transmit power of the macro BSs and the micro BSs  are given by $P_b$ and $P_m$, respectively. Assuming that the \emph{n}th subchannel of the \emph{j}th BS in the \emph{i}th cell is allocated to the \emph{k}th UE, then the received signal-to-interference-noise-ratio~(SINR) on this subchannel can be expressed by

\begin{equation} \label{sinr1}
{\rm SINR}_{i,j,k}^n = \frac{{p_{i,j}^nh_{i,j,k}^n}}{{\sum\limits_{i',j' \ne i,j} {p_{i',j'}^nh_{i',j',k}^n  + {N_0}}  }},
\end{equation}
where $p_{i,j}^n$ is the power allocated on the \emph{n}th subchannel at the \emph{j}th BS in the \emph{i}th cell and $N_0$ is the variance of the Additive White Gaussian Noise~(AWGN). Then the data rate of the \emph{k}th UE received from the \emph{j}th BS in the \emph{i}th cell on the \emph{n}th subchannel in terms of bit/s/Hz is given by
\begin{equation}
R_{i,j,k}^n=\log(1+{\rm SINR}_{i,j,k}^n).
\end{equation}

Let  binary variable ${u_{i,j,k}}$   represent the  user association, where ${u_{i,j,k}} =1 $ if the \emph{k}th UE is associated with the \emph{j}th BS in the \emph{i}th cell  and ${u_{i,j,k}} =0 $ otherwise. Let binary variable $s_{i,j,k}^n$ represent the   subchannel allocation,  where it equals  $1$ if \emph{n}th subchannel of the \emph{j}th BS in the \emph{i}th cell is allocated to the \emph{k}th UE, and  equals $0$ otherwise. Then the sum-rate of the heterogeneous network is given as follows
\begin{equation}
{\sum\limits_{i = 1}^{{N_c}} {\sum\limits_{j = 1}^{{N_m}} {\sum\limits_{k = 1}^{{N_u}} {\sum\limits_{n = 1}^{{N_s}} {u_{i,j,k}s_{i,j,k}^nR_{i,j,k}^n} } } } }.
\end{equation}

\subsection{Problem Formulation}
In this paper, our goal is to jointly optimize  user association, subchannel allocation, and power allocation with the objective of maximizing the weighted sum-rate.  Mathematically, the considered problem is formulated as
\begin{subequations}  \label{eq:1}
\begin{align}
   {\rm{ }}\max_{u_{i,j,k},s_{i,j,k}^n,p_{i,j}^n} & \sum\limits_{i = 1}^{{N_c}} {\sum\limits_{j = 1}^{{N_m}} {\sum\limits_{k = 1}^{{N_u}} {\sum\limits_{n = 1}^{{N_s}} {{\omega _k}u_{i,j,k}s _{i,j,k}^nR_{i,j,k}^n} } } }            \label{eq:1A} \\
               \ \ {\rm s.t.}   \quad \ & {\rm{  }}\sum\limits_{i = 1}^{{N_c}} {\sum\limits_{j = 1}^{{N_m}} {s_{i,j,k}^n \le 1,{\rm{    }}\quad\forall k,n} } ,            \label{eq:1B} \\
     \qquad & \sum\limits_{k = 1}^{{N_u}} {s_{i,j,k}^n}  \le 1,{\rm{         }}\quad \forall i,j,n, \label{eq:1C} \\
     \qquad & s_{i,j,k}^n \in \{ 0,1\} ,{\rm{     }}\quad \forall i,j,k,n,     \label{eq:1D}     \\
     \qquad & u_{i,j,k} \in \{ 0,1\} ,{\rm{     }}\quad \forall i,j,k,       \label{eq:1E}    \\
     \qquad & \sum\limits_{n = 1}^{{N_s}} {p_{i,j}^n \le {P_{i,j}}} {\rm{,      }}\quad \forall i,j,  \label{eq:1F} \\
     \qquad & {\rm{0}} \le p_{i,j}^n \le P_{i,j}^n{\rm{,       }}\quad \forall i,j,n.  \label{eq:1G}
\end{align}
\end{subequations}
In problem~\eqref{eq:1},~\eqref{eq:1A} represents the weighted sum-rate of the heterogeneous network, where $\omega_k$ is the weight of the \emph{k}th UE. Constraint~\eqref{eq:1B} means that on each subchannel, one UE can be connected to at most one BS. Constraint~\eqref{eq:1C} means that on each subchannel, one BS can serve at most one UE. Transmit powers are constrained by both total power limits given in~\eqref{eq:1F} and spectral masks in~\eqref{eq:1G}. For the user association variable ${u_{i,j,k}}$, we do not impose the constraint $\sum\limits_{i = 1}^{{N_c}} {\sum\limits_{j = 1}^{{N_m}} {{u_{i,j,k}} \le 1,} } \ \forall k$, which means that multi-association is permissible in our formulation.
Note that our formulation in~\eqref{eq:1} does not include any instantaneous QoS constraints for individual UEs. When the channel condition of a certain UE is very week, supporting its instantaneous QoS will consume vast resources (bandwith and power) or even be infeasible. Furthermore, since we assume CCD in the network, supporting the QoS of the UEs with bad channels creates severe interference to other UEs.   Note that different weights are introduced to represent different priorities and provide different QoS. By giving each UE a weight, we not only consider the different QoS of the UEs but also avoid wasting of resources compared to the instantaneous QoS constraints.

Sovling problem~\eqref{eq:1} is challenging due to the existence of the binary variables  and the non-convex SINR structure. A direct method would involve an exhaustive search over all possible  user association and subchannel allocation, followed by finding the optimal power allocation for each of them. However, the complexity is exponential which makes the exhaustive search infeasible in practice. Moreover, even for  fixed user association and subchannel allocation, it is still difficult to optimize the power due to the non-convex structure. Therefore, the conventional convex and quasi-convex optimization methods are not applicable to obtaining the optimal solution of problem~\eqref{eq:1}.

\section{Joint Optimization of User Association, Subchannel Allocation and Power Allocation}
In this section, we propose an alternating optimization method to solve the joint optimization problem~\eqref{eq:1}. First, we divide the problem into two subproblems: 1) joint optimization of user association and subchannel allocation for a fixed power allocation, 2) power allocation for  fixed  user association and subchannel allocation. For the first subproblem,  the globally optimal solution is derived. For the second subproblem, we  will obtain a local optimal solution. Second, we obtain a locally optimal solution of the joint optimization problem by solving these subproblems  alternately.

\subsection{Joint Optimization of User Association and Subchannel Allocation for Fixed Power Allocation}
In this subsection, we show how to obtain the optimal  user association and subchannel allocation for given power allocation. In the conventional schemes, each UE can only associate with one BS in each time slot. Most previous works  only consider either user association \cite{five,86de7,2A,2B,2C} or subchannel allocation \cite{six,seven}.  A few recent work consider joint optimiztion of user association and subchannel allocation, such as \cite{HetNets} and \cite{84}. However, only a suboptimal solution is achieved in the two articles.  In the following, we will propose
a new user association and subchannel allocation method by exploiting the graph theory, which yields the optimal solution in multi-association system.

The joint user association and subchannel allocation problem is formulated as follows
\begin{subequations}  \label{eq:111}
\begin{align}
   {\rm{ }}\max_{u_{i,j,k},s_{i,j,k}^n} & \sum\limits_{i = 1}^{{N_c}} {\sum\limits_{j = 1}^{{N_m}} {\sum\limits_{k = 1}^{{N_u}} {\sum\limits_{n = 1}^{{N_s}} {{\omega _k}u_{i,j,k}s _{i,j,k}^nR_{i,j,k}^n} } } }            \label{eq:1A} \\
               \ \ {\rm s.t.}   \quad \ & {\rm{  }}\sum\limits_{i = 1}^{{N_c}} {\sum\limits_{j = 1}^{{N_m}} {s_{i,j,k}^n \le 1,{\rm{    }}\quad\forall k,n} } ,            \label{eq:1B} \\
     \qquad & \sum\limits_{k = 1}^{{N_u}} {s_{i,j,k}^n}  \le 1,{\rm{         }}\quad \forall i,j,n, \label{eq:1C} \\
     \qquad & s_{i,j,k}^n \in \{ 0,1\} ,{\rm{     }}\quad \forall i,j,k,n,     \label{eq:1D}     \\
     \qquad & u_{i,j,k} \in \{ 0,1\} ,{\rm{     }}\quad \forall i,j,k.       \label{eq:1E}
\end{align}
\end{subequations}
Denote $\rho _{i,j,k}^n = {u_{i,j,k}}s_{i,j,k}^n$,  problem~\eqref{eq:111} is simplified as
\begin{subequations}  \label{eq:2}
\begin{align}
    {\rm{ }}\max_{\rho_{i,j,k}^n}  & \sum\limits_{i = 1}^{{N_c}} {\sum\limits_{j = 1}^{{N_m}} {\sum\limits_{k = 1}^{{N_u}} {\sum\limits_{n = 1}^{{N_s}} {{\omega _k}{\rho}_{i,j,k}^nR_{i,j,k}^n} } } }            \label{eq:2A} \\
               \ \ {\rm s.t.} \ {\rm{  }} & \sum\limits_{i = 1}^{{N_c}} {\sum\limits_{j = 1}^{{N_m}} {{\rho}_{i,j,k}^n \le 1,{\rm{    }}\quad\forall k,n} } ,           \label{eq:2B} \\
     \qquad & \sum\limits_{k = 1}^{{N_u}} {{\rho}_{i,j,k}^n}  \le 1,{\rm{         }}\quad\forall i,j,n, \label{eq:2C}  \\
     \qquad & \rho _{i,j,k}^n \in \{ 0,1\} ,{\rm{     }}\quad \forall i,j,k,n.     \label{eq:2D}
\end{align}
\end{subequations}
One can observe that the summations and the constraints in the above problem are independent with respect to $n$, which suggests that problem~\eqref{eq:2} can be decomposed into $N_s$ subproblems. Without loss of generality, we concentrate on the $m$th subchannel for analysis, where  $m\in \{1, \cdots, N_s\}$, i.e.,
\begin{subequations}  \label{eq:3}
\begin{align}
 {\rm{ }}\max_{\rho_{i,j,k}^m} & \sum\limits_{i = 1}^{{N_c}} {\sum\limits_{j = 1}^{{N_m}} {\sum\limits_{k = 1}^{{N_u}} {{\omega _k}{\rho}_{i,j,k}^mR_{i,j,k}^m} } }             \label{eq:3A} \\
   \ {\rm s.t.} \ &  {\rm{  }}\sum\limits_{i = 1}^{{N_c}} {\sum\limits_{j = 1}^{{N_m}} {{\rho}_{i,j,k}^m \le 1,{\rm{    }}\quad\forall k,} } \label{eq:3B} \\
    \qquad & \sum\limits_{k = 1}^{{N_u}} {{\rho}_{i,j,k}^m}  \le 1,{\rm{         }}\quad\forall i,j. \label{eq:3C}
\end{align}
\end{subequations}
Note that, multi-association is not supported in problem~\eqref{eq:3}, since it only focuses on the $m$th subchannel. Namely, one UE can be connected to at most one BS and one BS can serve at most one UE, if there is only one subchannel. However,  problem~\eqref{eq:111} supports multi-association  due to that one user can be connected to different BSs on different subchannels. Problem~\eqref{eq:3} is a combinatorial optimization problem \cite{new} which can always be solved by   exhaustive search for all the possible cases.  Obviously, this leads
to a prohibitive computational complexity especially when $N_c$, $N_m$, and $N_k$ are large.  Therefore, it is necessary to explore a new way to solve problem~\eqref{eq:3}. In the following, we will  transform problem~\eqref{eq:3}  into an equivalent bipartite matching problem \cite{twentyone}.

We construct a bipartite graph \cite{twentyone} ${\rm{A = (}}{{\rm{V}}_{BS}} \times {V_{UE}}{\rm{,E)}}$, where the two sets of vertices, $V_{BS}$ and $V_{UE}$, are sets of BSs and UEs, respectively. Denote $E$ as the set of edges that connect to the vertices in the different set.  Vertice
 $v_{BS}(i,j)$ denotes the \emph{j}th BS in the \emph{i}th cell and vertice $v_{UE}(k)$ denotes the \emph{k}th UE. Let $e(i,j,k)$ denote the edge connecting $v_{BS}(i,j)$ and $v_{UE}(k)$, and $w(i,j,k)$ is the weight of $e(i,j,k)$. We use
$\left|  \cdot  \right|$ to represent the cardinality of a set. Then $\left| {{V_{BS}}} \right| = {N_c}{N_m}$, $\left| {{V_{UE}}} \right| = {N_{\rm{u}}}$ and $\left| {\rm{E}} \right| = {N_c}{N_m}{N_u}$. Given a graph G = (V, E), a matching M in G is a set of pairwise non-adjacent edges. That is, no two edges share a common vertex. Let $S_M$ be the set consisting of all possible matchings. According to the description above, if we denote $w(i,j,k) ={{\omega _k}R_{i,j,k}^m}$, then we can solve problem~\eqref{eq:3} by finding a set of edges ${E^*}({E^*} \in {S_M})$ in the bipartite graph, which maximizes the sum-weight of the edges in ${E^*}$. This can be explained as follows:
\begin{itemize}
\item The value of $\rho_{_{i,j,k}}^{m}$ can be equally represented by the selection of the edge in $E^*$. ${{e(i,j,k)}} \in {E^*}$ represents $\rho_{_{i,j,k}}^{m}=1$ and ${{e(i,j,k)}} \notin {E^*}$ represents $\rho_{_{i,j,k}}^{m}=0$.\ \

\item The constraints in problem~\eqref{eq:3} is equal to ${E^*} \in {S_M}$.\ \

\item The maximum weighted sum-rate in problem~\eqref{eq:3} is equal to the maximum sum-weight of the edges in ${E^*}$.
\end{itemize}

This bipartite matching problem is called maximum weighted bipartite matching~(MWBM) problem \cite{twentyone}. The Hungarian algorithm \cite{twentytwo} is a classical algorithm to solve the MWBM problem. By converting problem~\eqref{eq:3} to the MWBM problem, we can obtain the globally optimal solution in polynomial time. After obtaining the optimal $\rho _{i,j,k}^n$, one can derive the  corresponding optimal user association and subchannel allocation by the following equations:

\begin{subequations}
\begin{align}
{u_{i,j,k}} = & \left\{ {\begin{array}{*{20}{c}}
   {1,{\rm{   if  }}\sum\limits_{n = 1}^{{N_s}} {\rho _{i,j,k}^n}  > 0,}  \\
   {0,{\rm{    \qquad      otherwise}}{\rm{,}}}  \\
\end{array}} \right.      \\
s_{i,j,k}^n= & \rho _{i,j,k}^n.
\end{align}
\end{subequations}

Note that, if a UE is required to associate to one BS only, the existing works can only obtain a suboptimal solution for the joint user association and subchannel allocation problem. Therefore, the multi-association assumption not only helps us improve the network throughput, but also makes the problem more tractable.

\subsection{Power Allocation for Fixed  User Association and Subchannel Allocation}
In this subsection, we discuss power allocation under fixed  user association and subchannel allocation.
Denote ${\bf{p}}\triangleq {(p_{1,1}^1,,...,p_{1,1}^{{N_s}},p_{1,2}^1,...,p_{1,2}^{{N_s}},...,p_{{N_c},
{N_m}}^1,...,p_{{N_c},{N_m}}^{{N_s}})^T}\in \mathbb R^{{N_cN_MN_S}}$.  For given $\rho_{i,j,k}^n$, problem~\eqref{eq:1} is simplified as
\begin{subequations}  \label{eq:34}
\begin{align}
   {\rm{ }}\max_{p_{i,j}^n} &\sum\limits_{i = 1}^{{N_c}} {\sum\limits_{j = 1}^{{N_m}} {\sum\limits_{n = 1}^{{N_s}} {{w_{k_{i,j,n}^ * }}R_{i,j,k_{i,j,n}^ * }^n} } }({\bf{p}})            \label{eq:34A} \\
               \ {\rm s.t.}  \ & \sum\limits_{n = 1}^{{N_s}} {p_{i,j}^n \le {P_{i,j}}} {\rm{,         }}\quad\forall i,j,        \label{eq:34B} \\
      & {\rm{0}} \le p_{i,j}^n \le P_{i,j}^n{\rm{,       }}\quad\forall i,j,n, \label{eq:34C}
\end{align}
\end{subequations}
where $k_{i,j,n}^*$ represents the $k$th UE associated with the \emph{j}th BS in the \emph{i}th cell on the \emph{n}th subchannel, and it has been determined by the given  user association and subchannel allocation.

Our goal is to obtain the optimal $p_{i,j}^n$ that  maximizes the  weighted sum-rate while satisfying the power constraints in~\eqref{eq:34}.  Note that for the special case when $N_c=1$ and $N_m=1$, problem~\eqref{eq:34} can be optimally solved by  the conventional water-filling algorithm \cite{ten}. However, for the multi-cell scenario, the problem becomes much more complicated due to the existence of inter-cell and intra-cell interference. In this case,  any  power allocation change will bring  impact on the resulting interference  as well as the SINR. Therefore,  the conventional water-filling algorithm is not applicable any more.  To deal with this problem, the authors in \cite{37} propose the iterative water-filling (IW) algorithm:
With a fixed total power constraint in each BS and uniform power allocation initially, the first BS updates its power allocation by the classical water-filling method, treating signals transmitted from all the other BSs as noise. Then the same process will be done for all the BSs one after another, and so forth until the process converges. However, since each BS never considers its interference to  other BSs, the IW algorithm is doomed to be unable to  achieve an  ideal network throughput.   In the next, we will propose a novel method by exploiting its implicit DC structure in problem~\eqref{eq:34}.

Note that the objective function of problem~\eqref{eq:34} is differentiable and can be written as the difference of two concave functions~\eqref{xiugai1}and~\eqref{xiugai2}.
\begin{figure*}[hb]
\begin{equation}\label{xiugai1}
g({\bf{p}})\triangleq \sum\limits_{i = 1}^{{N_c}} {\sum\limits_{j = 1}^{{N_m}} {\sum\limits_{n = 1}^{{N_s}} {{w_{k_{i,j,n}^ * }}} } }{\log \left( {\sum\limits_{i' = 1}^{{N_c}} {\sum\limits_{j' = 1}^{{N_m}} {p_{i',j'}^nh_{i',j',k_{i,j,n}^*}^n} }  + {N_0}} \right)},
\end{equation}
\end{figure*}
\begin{figure*}[hb]
\begin{equation}\label{xiugai2}
h({\bf{p}})\triangleq \sum\limits_{i = 1}^{{N_c}} {\sum\limits_{j = 1}^{{N_m}} {\sum\limits_{n = 1}^{{N_s}} {{w_{k_{i,j,n}^ * }}} } }{\log \left( {\sum\limits_{i' = 1}^{{N_c}} {\sum\limits_{j' = 1}^{{N_m}} {p_{i',j'}^nh_{i',j',k_{i,j,n}^*}^n - p_{i,j}^nh_{i,j,k_{i,j,n}^*}^n} }  + {N_0}} \right)}.
\end{equation}
\hrulefill
\end{figure*}
Such a problem is recognized as the difference of two concave
functions (DC) programming problem, which can
be efficiently solved via the DCA method \cite{81,tang}. The main idea of the DCA method is replacing the minuend by its first order Taylor expansion around some point and then solving the resulting convex problem.  For problem~\eqref{eq:34}, it is approximated as the following  problem  at the $s$th iteration
\begin{subequations}  \label{eq:17}
\begin{align}
   {\rm{ }}\max_{\mathbf p}  \ \ & g({\bf{p}})-h({\bf{p}}[s-1])-\nabla {h^T}({\bf{p}}[s-1])({\bf{p}}-{\bf{p}}[s-1])             \label{eq:17A} \\
  \ {\rm s.t.} \ \ \ &  \sum\limits_{n = 1}^{{N_s}} {p_{i,j}^n \le {P_{i,j}}} {\rm{,         }}\quad\forall i,j,            \label{eq:17B} \\
& {\rm{0}} \le p_{i,j}^n \le P_{i,j}^n{\rm{,       }}\quad\forall i,j,n. \label{eq:17C}
\end{align}
\end{subequations}
Alg. 1 shows how the DCA method works for problem~\eqref{eq:34}.  Its  convergence  has been discussed in previous works. In \cite{81}, the authors prove the convergence of the DCA method. In \cite{90}, the authors have shown that it can converge to a point which satisfies KKT optimality conditions. However, this proof depends heavily on the specific structure of their mathematical problem.   In Lemma 1, we will prove that the DCA method converges to a KKT optimal point, as long as the non-convex problem satisfies some  mild conditions. For the best of our knowledge,  this is the first proof which shows that the DCA method  converges to a KKT optimal point for general cases.
\begin{algorithm}[] %Ëã·¨µÄ¿ªÊ¼
\caption{The DCA method for solving problem~\eqref{eq:34}} %Ëã·¨µÄ±êÌâ
\begin{algorithmic}[1] %Õâ¸ö1 ±íʾÿһÐж¼ÏÔʾÊý×Ö
\STATE Choose an initial feasible point ${\bf{p}}[0]$ and set $s=1$.\\
\STATE Solve problem~\eqref{eq:17} and obtain ${\bf{p}}[s]$ .
\STATE Increase $s$ and go to step 2 until ${\bf{p}}[s]$ converges.
\end{algorithmic}
\end{algorithm}

\begin{lemma}
For any maximization problem with a convex feasible set, if the objective function is differentiable and can be written as the difference of two concave functions,  then at least a KKT point can be obtained by  the DCA method.
\end{lemma}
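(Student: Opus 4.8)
The plan is to show that the DCA method for a maximization problem of the form $\max\{g(\mathbf p)-h(\mathbf p):\mathbf p\in\mathcal C\}$, where $g$ and $h$ are concave and differentiable and $\mathcal C$ is convex, generates a sequence whose limit points satisfy the KKT conditions of the original problem. First I would establish the \emph{ascent property}: at iteration $s$ we solve the surrogate problem~\eqref{eq:17} whose objective is $\phi_s(\mathbf p)\triangleq g(\mathbf p)-h(\mathbf p[s-1])-\nabla h^T(\mathbf p[s-1])(\mathbf p-\mathbf p[s-1])$. Since $h$ is concave, its graph lies below every tangent, so $h(\mathbf p)\le h(\mathbf p[s-1])+\nabla h^T(\mathbf p[s-1])(\mathbf p-\mathbf p[s-1])$ for all $\mathbf p$; hence $\phi_s(\mathbf p)\le g(\mathbf p)-h(\mathbf p)$ with equality at $\mathbf p=\mathbf p[s-1]$. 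Because $\mathbf p[s]$ maximizes $\phi_s$ over $\mathcal C$, I get the chain
\begin{equation}
g(\mathbf p[s])-h(\mathbf p[s])\ge \phi_s(\mathbf p[s])\ge \phi_s(\mathbf p[s-1])=g(\mathbf p[s-1])-h(\mathbf p[s-1]).
\end{equation}
Thus the objective value is monotonically nondecreasing; since it is bounded above on the compact feasible set defined by~\eqref{eq:17B}--\eqref{eq:17C}, the sequence of objective values converges.

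Next I would argue convergence of the iterates and identify the limit. The feasible set $\mathcal C$ is compact, so $\{\mathbf p[s]\}$ has a convergent subsequence with limit $\mathbf p^*$; I would use the monotonicity together with the strict concavity contribution of $g$ (or a standard argument showing successive iterates become arbitrarily close) to conclude $\mathbf p[s]-\mathbf p[s-1]\to 0$, so the limit point is a fixed point of the DCA map, i.e. $\mathbf p^*$ maximizes $\phi_{s^*}(\mathbf p)$ with $\mathbf p[s^*-1]=\mathbf p^*$. The crucial observation is that at a fixed point the surrogate gradient coincides with the true gradient: $\nabla\phi_{s}(\mathbf p^*)=\nabla g(\mathbf p^*)-\nabla h(\mathbf p^*)$, because the linearization is taken at $\mathbf p^*$ itself.

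I would then write the KKT conditions of the convex surrogate problem~\eqref{eq:17} at its maximizer $\mathbf p^*$. Since problem~\eqref{eq:17} is a genuinely convex program (concave objective $\phi_s$, linear constraints), its KKT conditions are necessary and sufficient; there exist multipliers for~\eqref{eq:17B} and~\eqref{eq:17C} such that the stationarity, primal feasibility, dual feasibility, and complementary slackness conditions hold with gradient $\nabla\phi_s(\mathbf p^*)$. Substituting $\nabla\phi_s(\mathbf p^*)=\nabla g(\mathbf p^*)-\nabla h(\mathbf p^*)=\nabla\big(g-h\big)(\mathbf p^*)$, and noting the constraint functions are identical to those of the original problem~\eqref{eq:34}, I obtain exactly the KKT system of~\eqref{eq:34} with the \emph{same} multipliers. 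Hence $\mathbf p^*$ is a KKT point of the original nonconvex problem, proving the lemma.

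The main obstacle I anticipate is the rigorous passage from convergence of objective \emph{values} to a statement about the \emph{iterates} that justifies treating $\mathbf p^*$ as a fixed point with matching gradients; monotone boundedness alone guarantees value convergence but not iterate convergence, so I would need either a coercivity/strict-concavity argument on $g$ or a careful subsequence argument showing $\|\mathbf p[s]-\mathbf p[s-1]\|\to 0$ and invoking continuity of $\nabla g$ and $\nabla h$ and closedness of the solution map. A secondary technical point is ensuring a constraint qualification (the constraints are affine, so the linear CQ holds automatically) so that the KKT conditions of~\eqref{eq:17} are valid and can be transferred to~\eqref{eq:34}.
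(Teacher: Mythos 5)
Your proposal is sound and is, in fact, more ambitious than the paper's own argument. The paper does not prove convergence to a KKT point directly: it rewrites the problem in epigraph form ($\max t$ subject to $t-f_0(\mathbf x)\le 0$), verifies three properties of the linearized function $\tilde f$ --- it majorizes $f$ everywhere, coincides with $f$ at the expansion point, and has the same gradient there --- and then invokes \cite{83} for the statement that these three conditions are sufficient for the successive approximations to converge to a KKT point. You use exactly the same three ingredients (the tangent overestimate of the concave $h$, tightness and gradient matching at $\mathbf p[s-1]$), but instead of citing the convergence result you attempt to prove it: ascent property, limit-point analysis, fixed-point characterization, and transfer of the surrogate's KKT system to the original problem via the matching gradients and the shared affine constraints. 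That final transfer step is correct and is the real content of the lemma; as you note, the affine constraints supply the constraint qualification for free. So the two proofs rest on identical mathematics; yours is self-contained where the paper delegates the hard step to a citation, at the price of the technical issue you flag.

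That issue --- passing from convergence of objective values to a statement about the iterates --- is real, but it can be closed without establishing $\|\mathbf p[s]-\mathbf p[s-1]\|\to 0$ and without any strict-concavity or coercivity assumption on $g$. By compactness extract a subsequence $\mathbf p[s_k]\to\mathbf p^*$, and let $L$ denote the limit of the monotone bounded sequence of objective values. For every feasible $\mathbf p$,
\begin{equation}
(g-h)(\mathbf p[s_k+1])\;\ge\;\phi_{\mathbf p[s_k]}(\mathbf p[s_k+1])\;\ge\;\phi_{\mathbf p[s_k]}(\mathbf p),
\end{equation}
where $\phi_{\mathbf q}$ denotes the surrogate linearized at $\mathbf q$; the first inequality is the majorization, the second is optimality of $\mathbf p[s_k+1]$. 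Letting $k\to\infty$ and using continuity of $h$ and $\nabla h$ gives $L\ge\phi_{\mathbf p^*}(\mathbf p)$ for all feasible $\mathbf p$, while $\phi_{\mathbf p^*}(\mathbf p^*)=(g-h)(\mathbf p^*)=L$ by the tightness property and continuity. Hence $\mathbf p^*$ maximizes its own surrogate, i.e., it is a fixed point of the DCA map, and your KKT-transfer argument applies verbatim at $\mathbf p^*$. With this patch your proof is complete and actually supplies the justification that the paper leaves implicit in its appeal to \cite{83}.
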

\begin{proof}
Without loss of generality, consider the following non-convex problem:
\begin{equation}\label{con25}
\begin{split}
\max \ & {f_0}(\textbf{x})\\
 {\rm s.t.} \ \ & {\rm{  }}{f_i}(\textbf{x}) \le 0,\ \ i = 1,2,...m{\rm{,}}
\end{split}
\end{equation}
where $f_0$ can be written as the difference of two concave functions, i.e., $f_0(\textbf{x})=g(\textbf{x})-h(\textbf{x})$, and both $g(\textbf{x})$ and $h(\textbf{x})$ are concave functions. The feasible set created by the constraints is convex. Then we can introduce a new additional variable $t$ and express  problem~\eqref{con25} equivalently as
\begin{equation}
\begin{split}
\max \ & t\\
\ {\rm s.t.} \ \ & {\rm{  }}t-{f_0}(\textbf{x}) \le 0, \\
\  \quad \quad & {\rm{  }}{f_i}(\textbf{x}) \le 0,\ \ i = 1,2,...,m{\rm{.}}
\end{split}
\end{equation}
Denote ${{\bf{x}}^{(s)}}$ as the optimal solution of the \emph{s}th convex problem. Let $f(\textbf{x},t)\triangleq t-f_0(\textbf{x})=t-g(\textbf{x})+h(\textbf{x})$. According to the DCA method, we use $\tilde f(\textbf{x},t)$ approximate $f(\textbf{x},t)$ by replacing $h(\textbf{x})$ with its first order Taylor expansion around ${{\bf{x}}^{(s-1)}}$, i.e., $\tilde f(\textbf{x},t)=t-g(\textbf{x})+h({{\bf{x}}^{(s-1)}})+\nabla {h^T}({{\bf{x}}^{(s-1)}})(\textbf{x}-{{\bf{x}}^{(s-1)}})$, to form the \emph{s}th convex problem. In the next, we provide three properties of DCA.

1) The following inequality holds for arbitrary $\textbf{x}$ and ${{\bf{x}}^{(s-1)}}$ due to the concavity of $h(\textbf{x})$
\begin{equation}
h(\textbf{x})\le h({{\bf{x}}^{(s-1)}})+\nabla {h^T}({{\bf{x}}^{(s-1)}})(\textbf{x}-{{\bf{x}}^{(s-1)}}).
\end{equation}
Thus, we have the following result for arbitrary $\textbf{x}$
\begin{equation} \label{con1}
\begin{split}
 \ f(\textbf{x},t)&=t-g(\textbf{x})+h(\textbf{x}) \\
&   \le t-g(\textbf{x})+h({{\bf{x}}^{(s-1)}})+\nabla {h^T}({{\bf{x}}^{(s-1)}})(\textbf{x}-{{\bf{x}}^{(s-1)}})  \\
&   =\tilde f(\textbf{x},t).
\end{split}
\end{equation}

2) The optimal solution of the \emph{(s-1)}th convex problem satisfies the following equations
\begin{equation}\label{con2}
\begin{split}
&   \tilde f({{\bf{x}}^{(s-1)}},t)    \\
=&t-g({{\bf{x}}^{(s-1)}})+h({{\bf{x}}^{(s-1)}})  \\
& +\nabla {h^T}({{\bf{x}}^{(s-1)}})({{\bf{x}}^{(s-1)}}-{{\bf{x}}^{(s-1)}}) \\
=&   t-g({{\bf{x}}^{(s-1)}})+h({{\bf{x}}^{(s-1)}})  \\
=&   f({{\bf{x}}^{(s-1)}},t).
\end{split}
\end{equation}

3) The gradient of $\tilde f({{\bf{x}}},t)$ is
\begin{equation}
\begin{split}
& \ \quad  \nabla\tilde f({{\bf{x}}},t) \\
& =\nabla (t-g(\textbf{x})+h({{\bf{x}}^{(s-1)}})+\nabla {h^T}({{\bf{x}}^{(s-1)}})(\textbf{x}-{{\bf{x}}^{(s-1)}})) \\
&  =\nabla t-\nabla g(\textbf{x})+\nabla (\nabla {h^T}({{\bf{x}}^{(s-1)}})(\textbf{x}-{{\bf{x}}^{(s-1)}}))  \\
&  =\nabla t-\nabla g(\textbf{x})+\nabla h({{\bf{x}}^{(s-1)}}).
\end{split}
\end{equation}
Then we have the following equation
\begin{equation} \label{con3}
\begin{split}
 \nabla\tilde f({{\bf{x}}^{(s-1)}},t)&=\nabla t-\nabla g({{\bf{x}}^{(s-1)}})+\nabla h({{\bf{x}}^{(s-1)}}) \\
 &=\nabla f({{\bf{x}}^{(s-1)}},t).
\end{split}
\end{equation}
Note that, the equations~\eqref{con1},~\eqref{con2}, and~\eqref{con3} are the sufficient conditions for the convergence to a KKT point\cite{83}.  Therefore, the DCA method will converge to a KKT point.
\end{proof}

We can easily know that Alg.~1 will converge to a KKT point, since problem~\eqref{eq:34} actually satisfies the above constraints in Lemma~1.
Now the remaining  task  is to solve problem~\eqref{eq:17},  where Lagrange dual technique \cite{liuyong,kwan} can be employed. The Lagrangian function is
\begin{equation}\small
\begin{split}
\mathcal{L}({\bf{p}},\bm{\lambda})\triangleq & g({\bf{p}})-h({\bf{p}}[s-1])-\nabla {h^T}({\bf{p}}[s-1])({\bf{p}}-{\bf{p}}[s-1]) \\
&+ \sum\limits_{i = 1}^{{N_c}} {\sum\limits_{j = 1}^{{N_m}} {{\lambda _{i,j}}({P_{i,j}} - \sum\limits_{n = 1}^{{N_s}} {p_{i,j}^n} )} }.
\end{split}
\end{equation}
where $\bm{\lambda}$ is a Lagrange multiplier vector corresponding to the maximum transmit power constraint (10b). Then the dual optimization problem is given by
\begin{equation}\label{equ:solving}
\begin{split}
& \mathop {\min }\limits_{\bm{\lambda }} \mathop {\max }\limits_{\bm{p}} {\mathcal L}({\bm{p}},{\bm{\lambda }}) \\
& \ {\rm s.t.} \ {\lambda _{i,j}} \ge 0.
\end{split}
\end{equation}
The above dual problem can be solved iteratively by decomposing it into two nested loops: the inner loop that maximizes
{\bf{p}} for given \bm{\lambda}, and the outer loop that determines the optimal \bm{\lambda}.  In the following, we will discuss them in detail.

1) The inner loop:  Denote ${\bm{\lambda}^{(l)}}$  as the  $l$th iteration dual variable.  For given ${\bm{\lambda}^{(l)}}$, we can derive the following power allocation under the constraint (10c), by setting the derivative of $\mathcal{L}({\bf{p}},\bm{\lambda})$ with respect to $p_{i,j}^n$ to zero.
\begin{equation}\label{equ:split}
\begin{split}
p_{i,j}^n=\left[{\frac{1}{{{\lambda _{i,j}^{(l)}}+d_{i,j}^n}}-\frac{{\sum\limits_{(u,v)\ne (i,j)}{p_{u,v}^nh_{u,v,k_{i,j,n}^*}^n}+
{N_0}}}{{h_{i,j,k_{i,j,n}^*}^n}}}\right]_0^{P_{i,j}^n},
\end{split}
\end{equation}
where
\begin{equation} \label{con14}
\begin{array}{l}
\begin{split}
 d_{i,j}^n = & \sum\limits_{(i',j') \ne (i,j)}  {\left( {\frac{{h_{i,j,k_{i',j',n}^*}^n}}{{\sum\limits_{(u,v) \ne (i',j')} {p_{u,v}^n[s - 1]h_{u,v,k_{i',j',n}^*}^n}  + {N_0}}}} \right.}  \\
& \left. { - \frac{{h_{i,j,k_{i',j',n}^*}^n}}{{\sum\limits_{u,v} {p_{u,v}^nh_{u,v,k_{i',j',n}^*}^n}  + {N_0}}}} \right),
 \end{split}
 \end{array}
 \end{equation}
which is a taxation term related to the interference to other scheduled users. Since $p_{i,j}^n$ also appears on the right side in~\eqref{equ:split}, a closed-form power allocation expression cannot be obtained directly. However, a unique optimal power allocation can be obtained by the fixed point method as shown in \cite{3716de16}.

2) The outer loop: Once the optimal power allocation is achieved, the solution of the dual problem
 can be updated by the subgradient method as follows

\begin{equation}\label{con15}
{\lambda _{i,j}^{(l + 1)}} = {\left[ {{\lambda _{i,j}^{(l)}} + {\delta}\left( {\sum\limits_{n = 1}^N {p_{i,j}^n}  - {P_{i,j}}} \right)} \right]^ + },
\end{equation}
where ${\delta}$ is  a sufficiently
small step size. Since the dual problem is always a convex optimization problem, the subgradient method will converge to the globally optimal solution. The detailed algorithm to solve problem~\eqref{eq:17} is summarized as follows.

\begin{algorithm}[] %Ëã·¨µÄ¿ªÊ¼
\caption{The Lagrange Dual Algorithm for Solving  Problem~\eqref{eq:17}} %Ëã·¨µÄ±êÌâ
\begin{algorithmic}[1] %Õâ¸ö1 ±íʾÿһÐж¼ÏÔʾÊý×Ö
\STATE Set $l=1$. Initialize $\lambda _{i,j}^{(l)}$ and ${\delta}$.
\STATE For given $\lambda _{i,j}^{(l)}$, obtain the optimal $p_{i,j}^n$ from~\eqref{equ:split} by the fixed point method.
\STATE Update $\lambda _{i,j}^{(l+1)}$ by~\eqref{con15}.
\STATE Increase $l$ and go to step 2 until $\frac{{\left| {\lambda _{i,j}^{\left( {l + 1} \right)} - \lambda _{i,j}^{\left( l \right)}} \right|}}{{\lambda _{i,j}^{\left( {l + 1} \right)}}} \le \varepsilon $, for $\forall i,j.$
\end{algorithmic}
\end{algorithm}

The computational complexity of Alg.~2 is $O\left( {{K_\lambda }{K_P}{{({N_C}{N_m})}^2}{N_S}} \right)$, where ${K_\lambda }$ and ${K_P}$ are the required  number of iterations  for updating  $\bm{\lambda}$ and $\bf{p}$, respectively. From the simulation results in Section IV, we will see that  ${K_\lambda }$ is large, which inherently increases the computational complexity of Alg.~2.

In the remainder of this subsection, we will propose a low complexity algorithm to solve problem  (10) by avoiding the gradient descent search on the dual variable $\bm{\lambda}$. The basic idea behind this algorithm is that $\lambda_{i,j}^{(l)}$ and $p_{i,j}^{n,{(l)}}$  can be updated simultaneously in the $l$th iteration. More specifically,
in the \emph{l}th iteration, we first obtain the taxation term $d_{i,j}^{n,{(l)}}$ for all BSs according to~\eqref{con14} in parallel. Then we update $\lambda_{i,j}^{(l)}$ and $p_{i,j}^{n,{(l)}}$ according to the KKT conditions
\begin{subequations}\small  \label{eq:10}
\begin{align}
  p_{i,j}^{n,{(l)}}=& \left[{\frac{1}{{{\lambda_{i,j}^{(l)}}+d_{i,j}^{n,{(l)}}}}-\frac{{\sum\limits_{(u,v)\ne (i,j)}{p_{u,v}^{n,{(l-1)}}h_{u,v,k_{i,j,n}^*}^n}+{N_0}}}{{h_{i,j,k_{i,j,n}^*}^n}}}\right]_0^{P_{i,j}^n},  \label{eq:10A}\\
& {\lambda_{i,j}^{(l)}}({P_{i,j}} - \sum\limits_{n = 1}^{{N_s}} {p_{i,j}^{n,{(l)}}} ) = 0{\rm{ }}, \quad \forall i,j,     \label{eq:10B} \\
   &   \sum\limits_{n = 1}^{{N_s}} {p_{i,j}^{n,{(l)}} \le {P_{i,j}}} {\rm{,      }}\quad \forall i,j,  \label{eq:10C} \\
   &   {\lambda _{i,j}^{(l)}}\ge 0, \quad \forall i,j.
\end{align}
\end{subequations}

In theorem 1, we will prove that there always exists a unique $\lambda_{i,j}^{(l)}$ which satisfies the conditions in~\eqref{eq:10}. Furthermore, since $\sum\limits_{n=1}^{{N_s}} {p_{i,j}^{n,{(l)}}}$ is  decreasing with respect to $\lambda_{i,j}^{(l)}$, we can efficiently obtain this $\lambda_{i,j}^{(l)}$ via the bisection method \cite{bisec}. Our low complexity algorithm to solve problem~\eqref{eq:17} is summarized as Alg. 3. Its computational complexity is $O\left( {{K_T}{{({N_C}{N_m})}^2}{N_S}} \right)$, where $K_T$ is the number of iterations.

\begin{theorem}
Given $p_{i,j}^{n,{(l-1)}}$, there exists a unique $\lambda_{i,j}^{(l)}$ which satisfies the conditions in~\eqref{eq:10}.
\end{theorem}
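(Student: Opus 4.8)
The plan is to exploit the fact that the conditions in~\eqref{eq:10} decouple across base stations: once the previous iterate $p_{i,j}^{n,(l-1)}$ is fixed, the taxation term $d_{i,j}^{n,(l)}$ in~\eqref{con14} and the interference-plus-noise term appearing in~\eqref{eq:10A} are constants, and every quantity carrying the index $(i,j)$ is determined independently of the other base stations. Hence it suffices to fix one base station $(i,j)$ and prove existence and uniqueness of the single scalar $\lambda_{i,j}^{(l)}\ge 0$ satisfying~\eqref{eq:10A}--\eqref{eq:10C}; I will suppress the superscript and write $\lambda$ for this scalar. First I would introduce the total-power function
\[
\Phi(\lambda)\triangleq\sum_{n=1}^{N_s}\left[\frac{1}{\lambda+d_{i,j}^{n}}-\frac{\sum_{(u,v)\ne(i,j)}p_{u,v}^{n,(l-1)}h_{u,v,k_{i,j,n}^*}^n+N_0}{h_{i,j,k_{i,j,n}^*}^n}\right]_0^{P_{i,j}^n},
\]
which is exactly $\sum_n p_{i,j}^{n,(l)}$ regarded as a function of the candidate multiplier $\lambda$. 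A short preliminary step checks that $\Phi$ is well defined: comparing the two denominators in~\eqref{con14}, the first omits the nonnegative term indexed by $(i',j')$, so each summand of $d_{i,j}^{n}$ is nonnegative and therefore $d_{i,j}^{n}\ge 0$, which keeps $\lambda+d_{i,j}^{n}>0$ for $\lambda>0$; the spectral masks $P_{i,j}^{n}$ moreover guarantee that $\Phi$ has a finite right-limit at $\lambda=0$ even when some $d_{i,j}^{n}=0$.

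Next I would establish the two analytic properties that drive the argument. Continuity of $\Phi$ follows because $1/(\lambda+d_{i,j}^{n})$ is continuous and the clipping operator $[\cdot]_0^{P_{i,j}^n}$ is continuous, so $\Phi$ is a finite sum of continuous maps. Monotonicity follows because $1/(\lambda+d_{i,j}^{n})$ is strictly decreasing in $\lambda$ and projection onto $[0,P_{i,j}^n]$ preserves ordering, so every summand, and hence $\Phi$, is non-increasing; in addition $\Phi(\lambda)\to 0$ as $\lambda\to\infty$, since each bracketed term is eventually driven below zero and clipped to $0$. With these facts the existence proof splits into two cases. If $\Phi(0^+)\le P_{i,j}$, then $\lambda=0$ is dual feasible, satisfies primal feasibility~\eqref{eq:10C}, and annihilates~\eqref{eq:10B} through its first factor, so $\lambda=0$ is a valid solution. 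If $\Phi(0^+)>P_{i,j}$, then because $\Phi$ is continuous with limit $0<P_{i,j}$, the intermediate value theorem yields some $\lambda^\star>0$ with $\Phi(\lambda^\star)=P_{i,j}$, which satisfies~\eqref{eq:10B} through its second factor together with~\eqref{eq:10C}.

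The delicate part, and the step I expect to be the main obstacle, is uniqueness, because the clipping operator makes $\Phi$ only non-increasing rather than strictly decreasing: $\Phi$ can in principle be flat on intervals where every active subchannel is pinned at either $0$ or its cap $P_{i,j}^{n}$. To close this I would show that on the range relevant to complementary slackness $\Phi$ is genuinely strictly decreasing. Suppose two roots $\lambda_1<\lambda_2$ both satisfied the KKT conditions; since each is positive, complementary slackness forces $\Phi(\lambda_1)=\Phi(\lambda_2)=P_{i,j}$, and the non-increasing property then forces $\Phi\equiv P_{i,j}$ on $[\lambda_1,\lambda_2]$, whence every subchannel power is constant there. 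Any subchannel interior to $(0,P_{i,j}^n)$ would have power $1/(\lambda+d_{i,j}^{n})-c_{i,j}^n$ strictly decreasing, a contradiction; so I would argue from the fact that $\Phi(\lambda_2)=P_{i,j}>0$ requires at least one subchannel carrying interior power in a neighbourhood of the crossing (otherwise the total would overshoot $P_{i,j}$ rather than meet it), producing the needed strict decrease and ruling out the second root. In the complementary case $\Phi(0^+)\le P_{i,j}$, the non-increasing property already yields $\Phi(\lambda)<P_{i,j}$ for all $\lambda>0$, so~\eqref{eq:10B} admits only $\lambda=0$. Combining the cases gives existence and uniqueness of $\lambda_{i,j}^{(l)}$, and the established monotonicity of $\Phi$ simultaneously justifies the bisection search announced after the theorem.
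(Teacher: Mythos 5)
Your proposal is correct and follows essentially the same route as the paper: show that the total allocated power $\sum_{n} p_{i,j}^{n,(l)}$ is a continuous, decreasing function of $\lambda_{i,j}^{(l)}$, then split into the case where $\lambda=0$ already meets the budget (so complementary slackness forces $\lambda=0$) and the case where a unique positive root of $\Phi(\lambda)=P_{i,j}$ exists by the intermediate value theorem. You are in fact more careful than the paper, which simply asserts strict monotonicity and ignores the flat plateaus created by the spectral-mask clipping $[\cdot]_0^{P_{i,j}^n}$; your only residual soft spot is the degenerate situation in which a subset of the caps $P_{i,j}^n$ sums exactly to $P_{i,j}$ (where your ``would overshoot rather than meet'' argument is an assertion, not a proof), a corner case the paper's proof does not address either.
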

\begin{proof}
From~\eqref{eq:10A}, it is easy to  know that $\sum\limits_{n=1}^{{N_s}} {p_{i,j}^{n,{(l)}}}$ is strictly decreasing with respect to $\lambda_{i,j}^{(l)}$, when $\sum\limits_{n=1}^{{N_s}} {p_{i,j}^{n,{(l)}}}>0$. For given $p_{i,j}^{n,{(l-1)}}$, set ${\tilde \lambda _{i,j}}^{(l)}=0$ and calculate $\tilde p_{i,j}^n$ by~\eqref{eq:10A}. Then we have the following two cases.

1) For the case $\sum\limits_{n = 1}^{{N_s}} {\tilde p_{i,j}^n}  \le {P_{i,j}}$, $\tilde {\lambda}_{i,j}^{(l)}=0$ satisfies~\eqref{eq:10}. Furthermore, for any $\lambda_{i,j}^{(l)}>0$, we have $\sum\limits_{n = 1}^{{N_s}} {p_{i,j}^{n,{(l)}}} < {P_{i,j}}$ since  $\sum\limits_{n=1}^{{N_s}} {p_{i,j}^{n,{(l)}}}$ is strictly decreasing with respect to $\lambda_{i,j}^{(l)}$. Therefore, we have $\lambda _{_{i,j}}^{(l)}({P_{i,j}} - \sum\limits_{n = 1}^{{N_s}} {p_{i,j}^{n,{(l)}}} ) > 0$, which violates  equation~\eqref{eq:10B}. Hence we know that $\tilde {\lambda}_{i,j}^{(l)}=0$ is the unique $\lambda_{i,j}^{(l)}$ which satisfies the conditions in~\eqref{eq:10}.

2) For the case $\sum\limits_{n = 1}^{{N_s}} {\tilde p_{i,j}^n}>{P_{i,j}}$, due to the monotonicity of $\sum\limits_{n=1}^{{N_s}} {p_{i,j}^{n,{(l)}}}$, there exists a unique $\lambda_{i,j}^{(l)}>0$ which makes $\sum\limits_{n = 1}^{{N_s}} {p_{i,j}^{n,{(l)}}}={P_{i,j}}$ hold.  Furthermore, it is clear that this $\lambda_{i,j}^{(l)}$ satisfies the other conditions in~\eqref{eq:10}.
Theorem 1 is proved.
\end{proof}

\begin{algorithm}[] %Ëã·¨µÄ¿ªÊ¼
\caption{The Low Complexity Algorithm for Solving Problem~\eqref{eq:17}} %Ëã·¨µÄ±êÌâ
\begin{algorithmic}[1] %Õâ¸ö1 ±íʾÿһÐж¼ÏÔʾÊý×Ö
\STATE Choose an initial feasible point $p_{i,j}^{n,{(0)}}$ and set $l=1$.
\STATE Update $d_{i,j}^{n,{(l)}}$ according to~\eqref{con14} for all BSs.
\STATE Update $\lambda_{i,j}^{(l)}$ via the bisection method, then update $p_{i,j}^{n,{(l)}}$ according to~\eqref{eq:10A}.
\STATE Increase $l$ and go to step 2 until $\frac{{\left| {\lambda _{i,j}^{\left( {l+1} \right)} - \lambda _{i,j}^{\left( {l} \right)}} \right|}}{{\lambda _{i,j}^{\left( {l+1} \right)}}} \le \varepsilon,$ for $\forall i,j.$
\end{algorithmic}
\end{algorithm}
Note that once Alg. 3 converges, one will obtain a KKT point for problem~\eqref{eq:17}. Moreover, it is also a  globally optimal solution, according to the convexity of problem~\eqref{eq:17}. Deriving conditions under which Alg. 3 converges is intractable, although convergence has always been observed in our simulation results in Section IV.

\subsection{Joint Optimization of User Association, Subchannel Allocation, and Power Allotcation}
In the previous subsections, we have obtained the globally optimal  user association and subchannel allocation for fixed power allocation, and a KKT optimal power allocation for fixed user association and subchannel allocation. Now, we propose an alternating optimization method for the joint design of user association, subchannel allocation, and power allotcation, involving iterations between the Hungarian algorithm and Alg. 3 until convergence. The overall algorithm is summarized as follows,  whose convergence is guaranteed by Theorem $2$.
\begin{algorithm}[] %Ëã·¨µÄ¿ªÊ¼
\caption{The Joint Optimization Algorithm for Solving problem~\eqref{eq:1}} %Ëã·¨µÄ±êÌâ
\begin{algorithmic}[1] %Õâ¸ö1 ±íʾÿһÐж¼ÏÔʾÊý×Ö
\STATE Initialize ${\bf{p}}[0]$ uniformly and set $i=1$.
\STATE Update ${\bm{\rho}}[i]$ by the Hungarian algorithm for ${\bf{p}}[i-1]$.
\STATE Form the $i$-th approximated convex problem around ${\bf{p}}[i-1]$,  and solve this convex problem by Alg. 3 to update ${\bf{p}}[i]$.\\
\STATE Increase $i$ and go to step 2 until the weighted sum-rate converges.
\end{algorithmic}
\end{algorithm}

\begin{theorem}
The convergence of Alg. 4 can always be guaranteed.
\end{theorem}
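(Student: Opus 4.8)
The plan is to establish convergence by a monotone--bounded argument on the weighted sum-rate, which is the canonical route for alternating optimization schemes. Write $f(\bm{\rho},\mathbf{p})$ for the objective of problem~\eqref{eq:1}, viewed as a function of the joint association/subchannel indicator $\bm{\rho}$ (recall $\rho_{i,j,k}^n=u_{i,j,k}s_{i,j,k}^n$) and the power vector $\mathbf{p}$. I would first show that the sequence of objective values $\{f(\bm{\rho}[i],\mathbf{p}[i])\}_i$ generated by Alg.~4 is non-decreasing, then that it is bounded above; convergence then follows from the monotone convergence theorem.

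For the monotonicity I would control the two updates within each outer iteration separately. In Step~2, $\bm{\rho}[i]$ is computed by the Hungarian algorithm with the power held fixed at $\mathbf{p}[i-1]$; since Section~III-A proves that this returns the \emph{globally} optimal association and subchannel allocation for that fixed power, we immediately obtain $f(\bm{\rho}[i],\mathbf{p}[i-1])\ge f(\bm{\rho}[i-1],\mathbf{p}[i-1])$. In Step~3, $\mathbf{p}[i]$ is obtained by solving (via Alg.~3) the DC-linearized convex problem formed around $\mathbf{p}[i-1]$ with $\bm{\rho}[i]$ fixed. Here I would reuse the two facts established in the proof of Lemma~1: the surrogate built by replacing the concave minuend $h$ with its first-order Taylor expansion is a global lower bound of the true objective (the concavity inequality behind~\eqref{con1}) and coincides with it at the anchor $\mathbf{p}[i-1]$ (the tightness identity~\eqref{con2}). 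Writing $\tilde f$ for this surrogate, maximizing it gives $\tilde f(\bm{\rho}[i],\mathbf{p}[i])\ge\tilde f(\bm{\rho}[i],\mathbf{p}[i-1])$, and the lower-bound/tightness pair then yields
\[
f(\bm{\rho}[i],\mathbf{p}[i])\ \ge\ \tilde f(\bm{\rho}[i],\mathbf{p}[i])\ \ge\ \tilde f(\bm{\rho}[i],\mathbf{p}[i-1])\ =\ f(\bm{\rho}[i],\mathbf{p}[i-1]).
\]
Chaining the two inequalities shows $f(\bm{\rho}[i],\mathbf{p}[i])\ge f(\bm{\rho}[i-1],\mathbf{p}[i-1])$, i.e., the weighted sum-rate never decreases from one iteration to the next.

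For the upper bound I would invoke the power constraints~\eqref{eq:1F}--\eqref{eq:1G}: each $p_{i,j}^n$ lies in the compact box $[0,P_{i,j}^n]$, so every term $\mathrm{SINR}_{i,j,k}^n$ in~\eqref{sinr1} is at most $P_{i,j}^n h_{i,j,k}^n/N_0$, each rate $R_{i,j,k}^n=\log(1+\mathrm{SINR}_{i,j,k}^n)$ is finite, and the weighted sum of finitely many such bounded rates is bounded above by a constant independent of $i$. A non-decreasing sequence that is bounded above converges, which proves that the weighted sum-rate produced by Alg.~4 converges and establishes Theorem~2.

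The main obstacle is the Step~3 inequality: I must ensure that ``solving by Alg.~3'' does not decrease the \emph{true} weighted sum-rate even though Alg.~3 optimizes only the linearized surrogate, not $f$ itself. This is exactly where the minorization/tightness pair from Lemma~1 is essential, together with the observation that $\mathbf{p}[i-1]$ remains feasible for the surrogate problem (the feasible box is unchanged by linearization), so that $\tilde f(\bm{\rho}[i],\mathbf{p}[i])\ge\tilde f(\bm{\rho}[i],\mathbf{p}[i-1])$ is legitimate. I would also stress that the statement asserts convergence of the objective value only; characterizing the limit as a stationary point of the joint problem would demand extra compactness and continuity arguments and is not claimed here.
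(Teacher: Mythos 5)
Your proof is correct and follows essentially the same route as the paper's: monotonicity of the weighted sum-rate across Steps 2 and 3, plus boundedness from the finite power constraints, hence convergence of the objective value. Your treatment of Step 3 is in fact more complete than the paper's --- the paper passes from ``Alg.~3 finds the global optimum of the $i$-th approximated convex problem'' directly to ``the weighted sum-rate is monotonically increasing,'' whereas you explicitly supply the missing link, namely the minorization/tightness chain $f(\bm{\rho}[i],\mathbf{p}[i])\ge\tilde f(\bm{\rho}[i],\mathbf{p}[i])\ge\tilde f(\bm{\rho}[i],\mathbf{p}[i-1])=f(\bm{\rho}[i],\mathbf{p}[i-1])$ drawn from the inequalities established in Lemma~1, which is exactly what is needed to conclude that maximizing the surrogate cannot decrease the true objective.
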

\begin{proof}
From the above subsections, one can see the weighted sum-rate can only change in step 2 and step 3. In step 2, the Hungarian algorithm finds the globally optimal  user association and subchannel allocation. In step 3, Alg. 3 finds the globally optimal solution for the $i$-th approximated convex problem. Therefore, the sequence of iterations produces a monotonically increasing weighted sum-rate. Meanwhile, it is obvious that the weighted sum-rate has an upper bound for finite power constraints. Thus, the convergence of Alg. 4 is guaranteed.
\end{proof}

\begin{remark}
Note that even if we can obtain the globally optimal power for given  user association and subchannel allocation, the alternating optimization method  still  converges to a locally optimal solution. The globally optimal solution of problem~\eqref{eq:1} needs exhaustive search for all the possible  user association and subchannel allocation and get the optimal power allocation
for each possible case, which is impossible in practice.
\end{remark}

In the following, we will analyze the complexity of Alg. 4. In each iteration, the Hungarian algorithm is adopted in step 2 with complexity given by $O\left( {{N_u}^3{N_s}} \right)$, and Alg. 3 is adopted in step 3 with complexity given by $O\left( {{K_T}{{({N_C}{N_m})}^2}{N_S}} \right)$ as shown in the above subsection. Denote $K_J$ as the average iteration number, then the computational complexity of Alg. 4 is given by $O\left( {{K_J}{N_u}^3{N_s} + {K_J}{K_T}{{\left( {{N_c}{N_m}} \right)}^2}{N_s}} \right)$. From the simulation results,  we can observe that both $K_J$ and $K_T$ are very small. Therefore, Alg. 4 has a relatively low computational complexity.

\subsection{Multi-antenna Receivers}
In this subsection, we consider the case that all UE have the same number of antennas, denoted as $N_a$ ($N_a>1$). Note that by adopting multiple antennas, the receiver will obtain different SINR by different strategies. We focus on two different strategies: the maximum ratio combining (MRC) strategy and the interference rejection combining (IRC) strategy. According to what strategies the UEs apply, we partition them into two categories: MRC  receivers and IRC receivers.  In practice,  MRC receivers and IRC receivers  coexist  in the network, which are indistinguishable to the BSs.

Denote $\bm{h}_{i,j,k}^n\in \mathcal C^{N_a}$ as the channel coefficients vector from the \emph{j}th BS in the \emph{i}th cell to the \emph{k}th UE on the \emph{n}th subchannel. Assuming that the \emph{k}th UE is associated with the \emph{j}th BS in the \emph{i}th cell on the \emph{n}th subchannel, then the received signal in \emph{k}th UE is given by:
\begin{equation}\label{equ:ue}
 {\bf{y}}_{i,j,k}^n = {\bf{h}}_{i,j,k}^n\sqrt {p_{i,j}^n}x_{i,j}^n + \sum\limits_{i',j' \ne i,j} {{\bf{h}}_{i',j',k}^n\sqrt {p_{i',j'}^n}x_{i',j'}^n}  + {{\bf{n}}_{\bf{0}}},
\end{equation}
where $x_{i,j}^n$ is the data from the \emph{j}th BS in the \emph{i}th cell on the \emph{n}th subchannel,  and ${\bf{n}}_{\bf{0}}\sim \mathcal C\mathcal N(\mathbf 0, N_0\mathbf I_{N_a})$ is the complex AWGN vector.

For MRC receivers, the received signals in~\eqref{equ:ue} is multiplied by the channel coefficient vector, i.e.,
\begin{equation}
\begin{split}
&{({\bf{h}}_{i,j,k}^n)^H}{\bf{y}}_{i,j,k}^n={({\bf{h}}_{i,j,k}^n)^H}{\bf{h}}_{i,j,k}^n\sqrt{p_{i,j}^n} x_{i,j}^n  \\
& \qquad +\sum\limits_{i',j'\ne i,j}  {{{({\bf{h}}_{i,j,k}^n)}^H}{\bf{h}}_{i',j',k}^n\sqrt{p_{i',j'}^n}x_{i',j'}^n}+{({\bf{h}}_{i,j,k}^n)^H}{{\bf{n}}_{\bf{0}}},
\end{split}
\end{equation}
then the corresponding  SINR  is given by
\begin{equation}\label{equ:sinr} \small
{\rm SINR}_{i,j,k}^n = \frac{{p_{i,j}^n{{\left| {{{({\bf{h}}_{i,j,k}^n)}^H}{\bf{h}}_{i,j,k}^n} \right|}^2}}}{{\sum\limits_{i',j' \ne i,j} {p_{i',j'}^n{{\left| {{{({\bf{h}}_{i,j,k}^n)}^H}{\bf{h}}_{i',j',k}^n} \right|}^2} + {{\left| {{{({\bf{h}}_{i,j,k}^n)}^H}{{\bf{n}}_{\bf{0}}}} \right|}^2}} }}.
\end{equation}
Comparing formula~\eqref{equ:sinr} with formula~\eqref{sinr1}, we know that the SINR of the multi-antenna system has the same structure with that of the single-antenna system. Therefore, Alg. 4 can also be directly applied to the multi-antenna system  for the MRC receivers.

For IRC receivers, the received signals in~\eqref{equ:ue} is multiplied by some predefined vector, i.e.,
\begin{equation}
\begin{split}
{{\bf{w}}^H}{\bf{y}}_{i,j,k}^n=&{{\bf{w}}^H}{\bf{h}}_{i,j,k}^n\sqrt{p_{i,j}^n} x_{i,j}^n  \\
&+\sum\limits_{i',j'\ne i,j} {{{{\bf{w}}}^H}{\bf{h}}_{i',j',k}^n\sqrt{p_{i',j'}^n}x_{i',j'}^n}+{{\bf{w}}^H}{{\bf{n}}_{\bf{0}}},
\end{split}
\end{equation}
where {\small${{\bf{w}}^H} = {({\bf{h}}_{i,j,k}^n)^H}{(\sum\limits_{i',j' \ne i,j} {p_{i',j'}^n{\bf{h}}_{i',j',k}^n{{({\bf{h}}_{i,j,k}^n)}^H} + {N_0}{\bf{I}}} )^{ - 1}}$}. Thus, the received SINR  is given by
\begin{equation}\label{equ:is}
{\rm SINR}_{i,j,k}^n = \frac{{p_{i,j}^n{{\left| {{{{\bf{w}}}^H}{\bf{h}}_{i,j,k}^n} \right|}^2}}}{{\sum\limits_{i',j' \ne i,j} {p_{i',j'}^n{{\left| {{{{\bf{w}}}^H}{\bf{h}}_{i',j',k}^n} \right|}^2} + {{\left| {{{{\bf{w}}}^H}{{\bf{n}}_{\bf{0}}}} \right|}^2}} }}.
\end{equation}
From~\eqref{equ:is}, we can see that neither the numerator nor the denominator is convex. Therefore, the network throughput in this case does not have a DC structure.  However, since the BSs do not know which kind of receiver that the UEs apply in advance, they will continue to allocate power as if  the UEs applied MRC receivers. In the following simulation results, an interesting observation is that Alg. 4 behaves even better when applying to those IRC receivers.

\section{Simulation results}
In this section, we provide simulation results to demonstrate the effectiveness of the proposed methods in Multi-cell OFDMA Heterogeneous Networks. We consider a heterogeneous network consisting of $7$ cells, where one macro BS is deployed at the center and three micro BSs are randomly distributed in each cell. The power of each macro BS and micro BS are 46dBm and  30dBm, respectively. The bandwidth of each subcarrier is 15kHZ and each subchannel consists of 12 subcarriers. The whole frequency band is divided into $50$ subchannels. Unless otherwise specified, the inter-site distance (ISD) between macro BSs is set to be $500$ meters so that the network is interference limited. For the large-scale fading, the distance-dependent path loss in dB is modeled as $PL_{NLOS}=128.1+37.6\log_{10}(d)$, where $d$ is the distance from the user to the BS in kilometers. The log-normal shadowing is considered with $\sigma_{shadow}=10$dB and the penetration loss is assumed to be $20$dB. The small-scale fading is modeled as the normalized Rayleigh fading. The noise power spectral density is set to be $-174$dBm/Hz.

\subsection{Joint User Association and Subchannel Allocation}
In this subsection, we compare the throughput between different joint user association and subchannel allocation schemes. The number of UEs in each cell changes from 10 to 50. In Fig. 2, we compare the network throughput of the following three methods: 1) the Hungarian algorithm (HA): our algorithm which can obtain the globally optimal solution; 2) JO1: the joint optimization method proposed in \cite{84}; 3) JO2: the joint optimization method proposed in \cite{HetNets}. Uniform power allocation is performed in all the BSs. From the comparison, one can observe that our method outperforms the existing two joint optimization methods in multi-cell networks.
\begin{figure}[htp]
    \centering
    \includegraphics[width=3.5in]{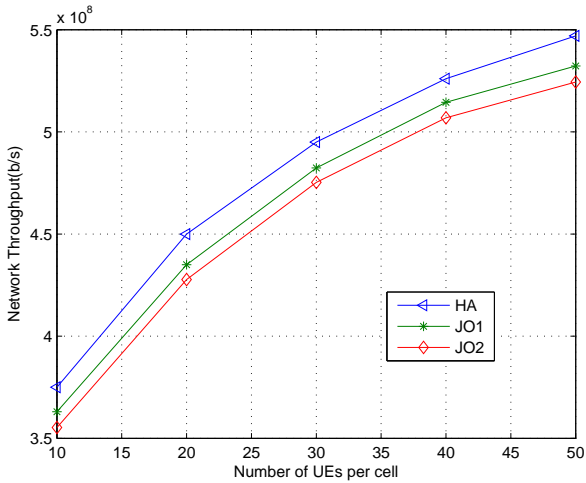}
    \caption{The network throughput versus the number of UEs per cell of HA, JO1 and JO2.}\label{paretonine}
\end{figure}

\subsection{Joint  User Association, Subchannel Allocation, and Power Optimization}
In this subsection, we  show the performance of the network throughput versus the number of UEs in each cell.
The number of UEs in each cell changes from 10 to 50. In Fig. 3, we compare the network throughput of the following five methods: 1) Alg. 4;  2) statistical CSI (SCSI): resource allocation by Alg.4 when the the intercell CSI is statistical, and the mean value is used instead of the instantaneous intercell CSI; 3) BPA: the belief propagation algorithm proposed in \cite{84} ; 4) IW: power allocation by the IW algorithm for fixed user association based on cell range expansion; 5) Static full spectral reuse (SFSR): uniform power allocation for fixed user association by cell range expansion. From the comparison, we can observe that Alg.~4 performs best among  all five schemes.  It should be mentioned that, when the intercell CSI is statistical,  Alg. 4 still outperforms the other three schemes. This result can be explained by the fact that  the interference from other cells is weaker compared with the intra-cell interference.  Hence, the SCSI method can be regarded as an alternative option in order to decrease the CSI overhead. Compared with BPA, we  can observe the throughput gain by Alg.4 and  SCSI due to the assumption of multiple-association and continuous power allocation.

In the following, we compare the computational complexity of the five methods. As mentioned above, the complexity of Alg. 4 and  SCSI  are both $O\left( {{K_J}{N_u}^3{N_s} + {K_J}{K_T}{{\left( {{N_c}{N_m}} \right)}^2}{N_s}} \right)$. The complexity of the BPA method is $O\left( {KW{H_v}^2 + KW{H_f}^2} \right)$, where $K$ is the average iteration number, $W$ is the avearage number of the scheduling options in variable nodes, $H_v$ is the average number of neighboring factor nodes of a variable node and $H_f$ is the average number of a factor node's neighboring variable nodes. The complexity of the IW algorithm and the SFSR method is $O\left( {{N_c}{N_m}{N_u} + K{{\left( {{N_c}{N_m}} \right)}^2}{N_s}} \right)$ and $O\left( {{N_c}{N_m}\left( {{N_u} + {N_s}} \right)} \right)$, respectively. In conclusion, the complexity of Alg. 4 and SCSI is higher than that of the IW algorithm and the SFSR method. Note that, we can't compare the complexity between our algorithm with the BPA method due to the fact that  $W$, $H_v$ and $H_f$ are not estimated in \cite{84}.
\begin{figure}[htp]
    \centering
    \includegraphics[width=3.5in]{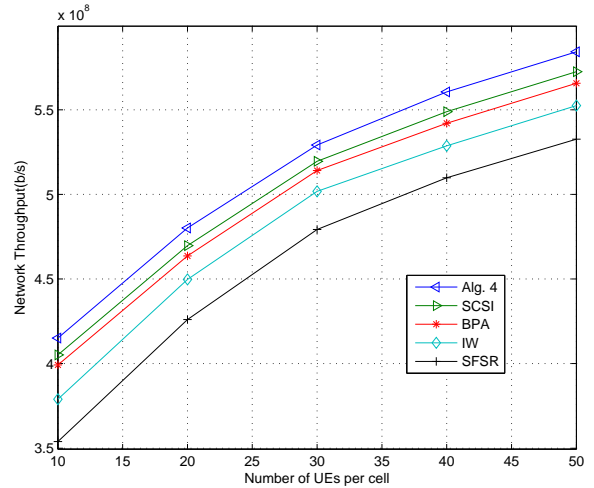}
    \caption{The network throughput versus the number of UEs per cell of Alg. 4, SCSI, BPA, IW, and SFSR.}
\end{figure}

In additon, if multi-association is not considered, the globally optimal solution of the joint optimization problem  can be obtained by the following two steps: 1) exhaustive search for all the possible user association; 2) get the optimal subchannel and power allocation for each possible case by the JSPPA algorithm \cite{1A}. The complexity in exhaustive search is $O\left( {{{\left( {{N_c}{N_m}} \right)}^{{N_u}}}} \right)$. The complexity of the JSPPA algorithm is $O\left( {{R_1}{{({N_e})}^{{N_c}{N_s}}}} \right)$, where $R_1$ is the number of iterations and $N_e$ is number of UEs in each cell. The globally optimal algorithm would take an unrealistically long time to return the globally optimal solution for  practical multi-cell networks. Therefore, we  compare Alg. 4 with the globally optimal algorithm in a single-cell scenario, where six  UEs are distributed uniform in the cell. The total number of subchannels is four. The transmit power of the micro BS is 30dBm and the transmit power of the macro BS changes from 40dBm to 46dBm. Fig. 4 shows the performance of the network throughput versus the transmit power of the macro BS.  From the simulation, we can see that the throughput gap between Alg. 4 and the above globally optimal (GO) algorithm is less than 3\%, which is negligible considering the huge computational complexity  reduction.

\begin{figure}[htp]
    \centering
    \includegraphics[width=3.5in]{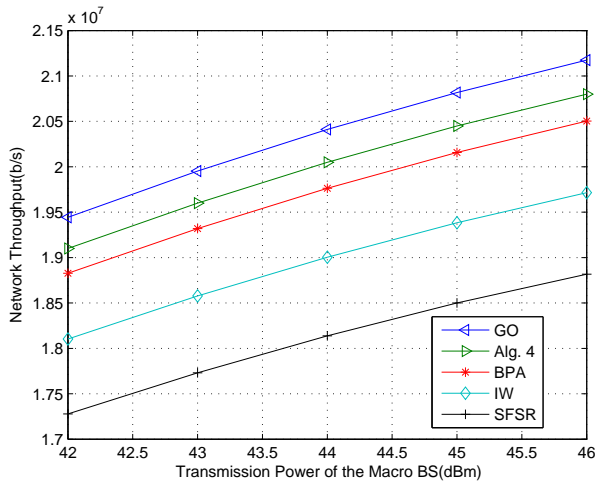}
    \caption{The network throughput of GO, Alg. 4, BPA, IW, and SFSR.}
\end{figure}

In the following, we compare Alg. 4 with the interference avoidance strategy where the BSs in each cell orthogonally utilize the resource. In Fig. 5, we compare the following three methods: 1) Alg. 4; 2) SFSR; 3) spectrum splitting (SS): subchannel allocation based on cell range expansion and power allocation based on classical water-filling algorithm \cite{ten} in each cell. Two scenarios are considered: the urban scenario where the ISD between macro BSs is 500m and the rural scenario where the ISD between macro BSs is 2000m. The number of UEs in each cell is $30$. The x-axis represents the ratio of subchannels used by micro BSs among all 50 subchannels for SS method, which are divided equally to $3$ micro BSs. As expected, the network throughputs of all the methods decrease with the increasing of ISD due to the increasing path loss. We can see that Alg. 4 is always better than the spectrum splitting policy. Another observation is that the best performance of SS is better than SFSR in the urban scenario. However, SFSR always outperforms SS in the rural scenario. The best performance of SS can be achieved at the ratio $18/50$, which means that in each cell, the macro BS occupy $32$ subchannels and every micro BS occupy $6$ subchannels.
\begin{figure}[htp]
    \centering
    \includegraphics[width=3.5in]{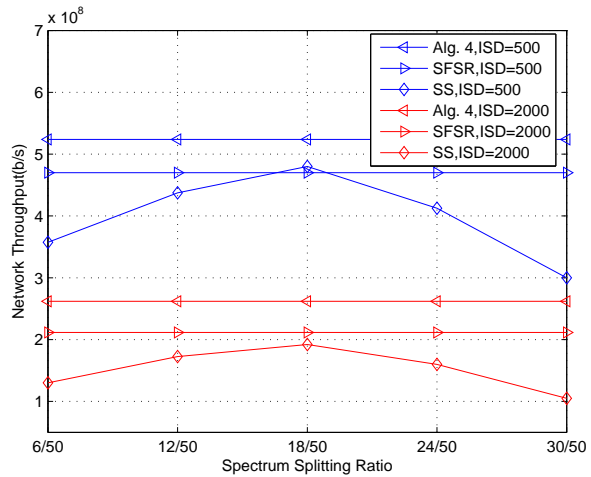}
    \caption{The network throughput of Alg. 4, SFSR, and the spectrum splitting policy.}\label{paretoten}
\end{figure}

\subsection{Convergence of Algorithm 4}
Fig. 6 shows the convergence behavior for Alg. 4, and the number of UEs in each cell changes from 10 to 50. In each iteration, the Hungarian algorithm and the DC approximation are implemented once, respectively. We set the precision to be 0.01 and simulate 1000 times. As expected, the average network throughput increases after each iteration, and Alg. 4 converges within 6 iterations.

\begin{figure}[htp]
    \centering
    \includegraphics[width=3.5in]{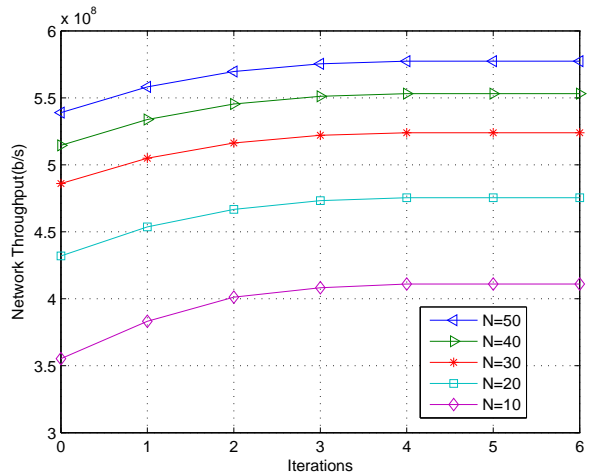}
    \caption{The average network throughput after each iteration by Alg. 4.}\label{paretotwelve}
\end{figure}

Tab. 1 shows the number of iterations to solve the convex problem in step 3 of Alg. 4, when the number of UEs changes from 30 to 150. Here, one iteration  means the power update on all the BSs. The precision is set to be 0.01. ${K_\lambda }$  and ${K_P}$ are average numbers of required iterations for updating $\bm{\lambda}$ and $\mathbf p$ respectively. ${K_T}$ is the average number of required iterations for Alg.~3. From Tab. 1, we can find that the iteration number of the Lagrange Dual method (given by ${K_\lambda }K_P$) is nearly 9 times larger than that of our method (given by ${K_T}$). Another observation is that the iteration number is almost independent of the number of UEs.
\begin{table}[htbp]\tiny
\centering
\caption{The number of iterations to solve the convex problem in step 3 of Alg. 4}
\begin{tabular}{|c|c|c|c|c|c|}
\hline
\backslashbox{parameters}{number of UEs}&30&60&90&120&150\\
\hline
${K_\lambda }$&11.314&12.015&11.941&11.736&10.928\\
\hline
$K_P$&3.829&3.906&3.514&3.716&4.037\\
\hline
$K_T$&4.235&4.419&4.137&3.819&4.386\\
\hline
\end{tabular}
\end{table}\label{tal:1}

\subsection{Multi-antenna Receivers}
In Fig. 7, we compare the network throughput of the following five methods: 1) IRC-Alg. 4: the subchannel and power allocation is achieved by Alg.4 and the UEs apply IRC receivers; 2) MRC-Alg. 4: the subchannel and power allocation is achieved by Alg.4 and the UEs apply MRC receivers; 3)IRC-SFSR: the subchannel and power allocation is achieved by SFSR and the UEs apply IRC receivers; 4) MRC-SFSR: the subchannel and power allocation is achieved by SFSR and the UEs apply MRC receivers; 5) single antenna(SA). Without loss of generality, we set the number of antennas to be 2. We observe that IRC receivers always performs better than MRC receivers under the same resource allocation method. However, if the resource allocation is implemented by Alg. 4, the gap between the two kinds of receivers will diminish due to the fact that Alg. 4 can decrease the interference from other BSs.

\begin{figure}[htp]
    \centering
    \includegraphics[width=3.5in]{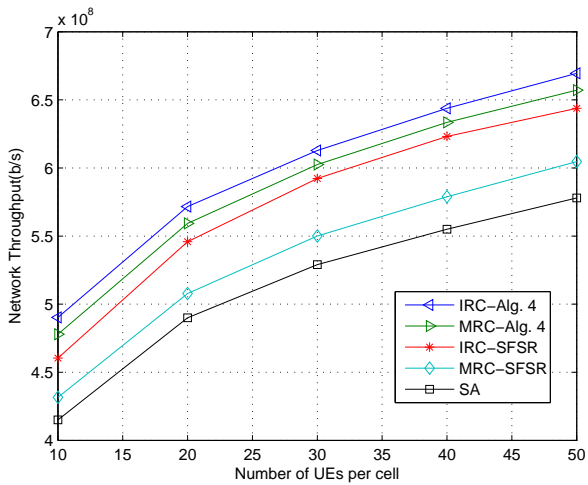}
    \caption{The network throughput versus the number of UEs per cell for the following five methods: IRC-Alg. 4, MRC-Alg. 4, IRC-SFSR, MRC-SFSR, and SA.}\label{paretotwelve}
\end{figure}

\subsection{Proportional Fairness}
The proportional fairness has been widely discussed since it strikes a good balance between network throughput and fairness by exploiting multiuser diversity and game-theoretic equilibrium \cite{fair2de2}. In this subsection, we will show the performance of our algorithm when proportional fairness is considered. It has been proven that proportional fairness can be achieved by setting the weight of each user to the reciprocal of the average rate in each time slot, when the total time slots is large enough \cite{fair2de3}. Denote the weight in time slot $i$ as

\begin{equation}\label{wee}
{\omega _k}(i) = \left\{ {\begin{array}{*{20}{c}}
   {1, \qquad \qquad \quad {\rm{}}i = 1,}  \\
   {\frac{1}{{{C_k}(i - 1)}}, \quad {\rm{  }}otherwise,}  \\
\end{array}} \right.
\end{equation}
where ${C_k}(i - 1)$ is the average rate of the $k$th UE from the first time slot to the $i - 1$th time slot. Then we can achieve proportional fairness by adopting the weight in \eqref{wee} in time slot $i$.

We focus on two scenarios: the dynamic scenario and the static scenario. In the dynamic scenario, we assume the location of the UEs are generated randomly and independently in each time slot for simplicity. While in the static scenario, the location of UEs remain static in all time slots. For each scenario, we compare the network throughput and the variance of average rate between the following four methods: 1) Alg. 4: resource allocation by Alg.4 without considering the fairness; 2) Alg. 4-PF: resource allocation by Alg.4 considering proportional fairness; 3) SFSR: resource allocation by SFSR without considering the fairness; 4) SFSR-PF: resource allocation by SFSR considering proportional fairness.

\begin{figure*}[htp]
\centering
\subfigure[Network throughput] {\includegraphics[width=3in]{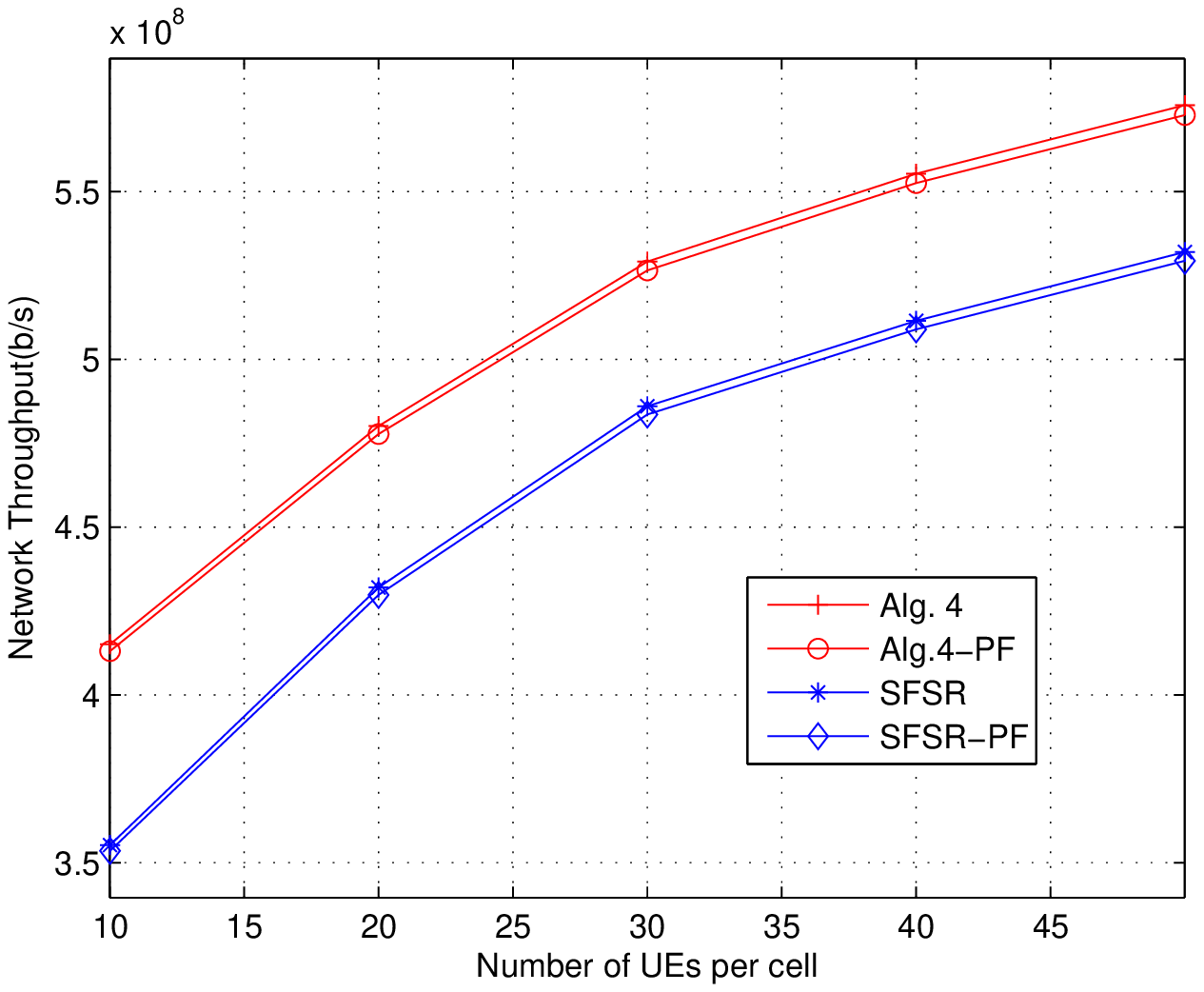}}
\subfigure[The variance of average rate] {\includegraphics[width=3in]{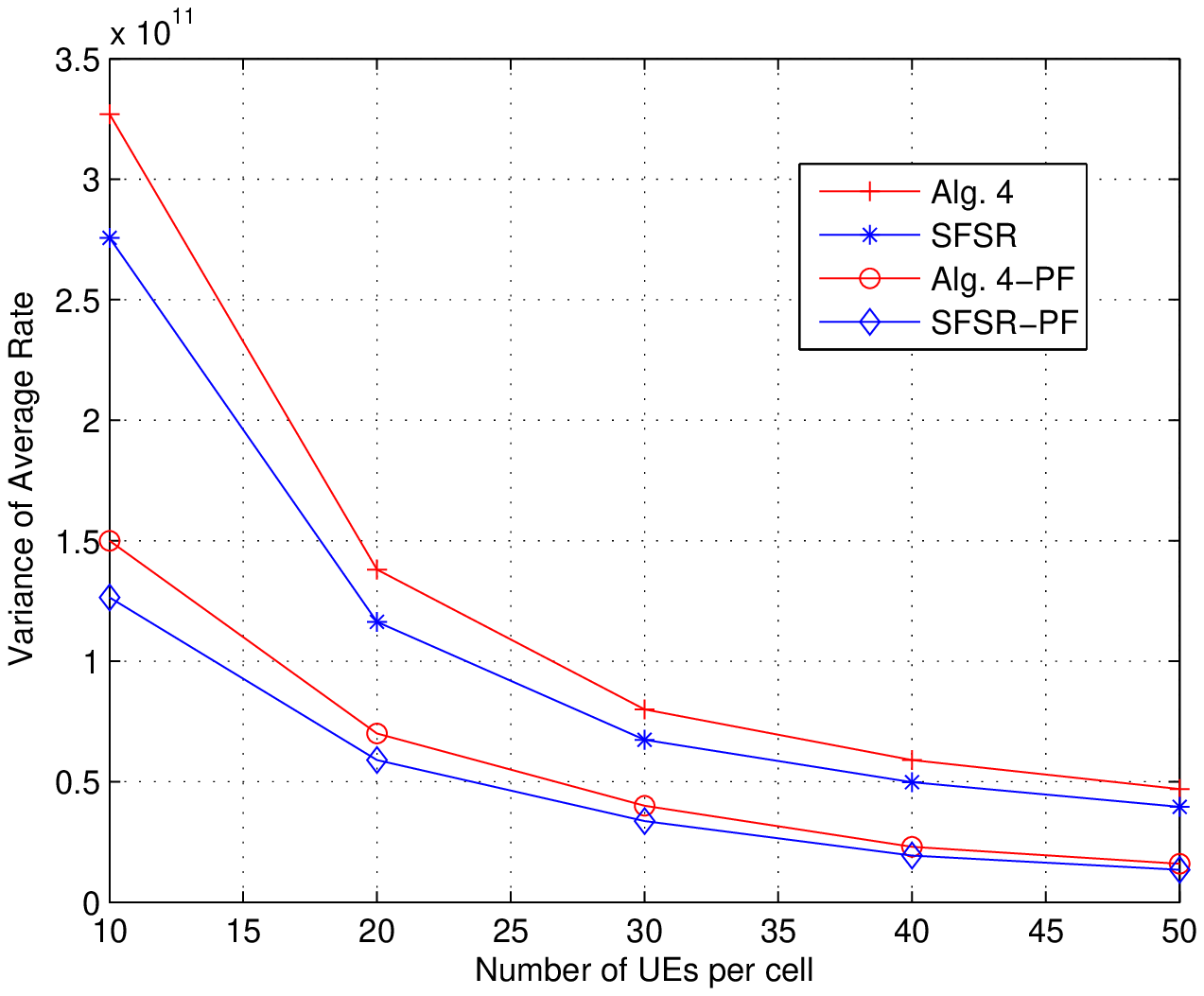}}
\caption{The comparison of network throughput and the variance of average rate in dynamic scenario.}
\label{fig5}
\end{figure*}

Fig. 8 shows the performance in the dynamic scenario, and the total time slot is  set as $1000$. From Fig. 8(a), we find that the network throughput without considering the fairness is similar to that when considering proportional fairness. The reason is that the channel coefficients of all UEs are independent identically distributed in each time slot. When the number of the time slot tends to infinity, the network throughput when considering proportional fairness will tend to that without considering the fairness. From Fig. 8(b), we observe that the variance of average rate when considering proportional fairness is nearly half of that without considering the fairness, which demonstrate the promotion of fairness. Another observation is that Alg. 4-PF performs better than SFSR in both network throughput and fairness.

Fig. 9 shows the performance in the static scenario and the total time slot is  set as $1000$. The results are obtained by averaging over 50 independent large-scale channel realizations. From Fig. 9(a), we know Alg. 4 yields a better network throughput than Alg. 4-PF. The throughput gain comes at the cost of fairness among the UEs, as shown in Fig. 9(b). Due to the fixed large-scale fading, the channel conditions  in most time  tend to be better for some UEs, while worse for others. Hence more resource is required to sustain the rate for UEs with bad channels, resulting in the decrease of the network throughput.
\begin{figure*}[htp]
\centering
\subfigure[Network throughput] {\includegraphics[width=3in]{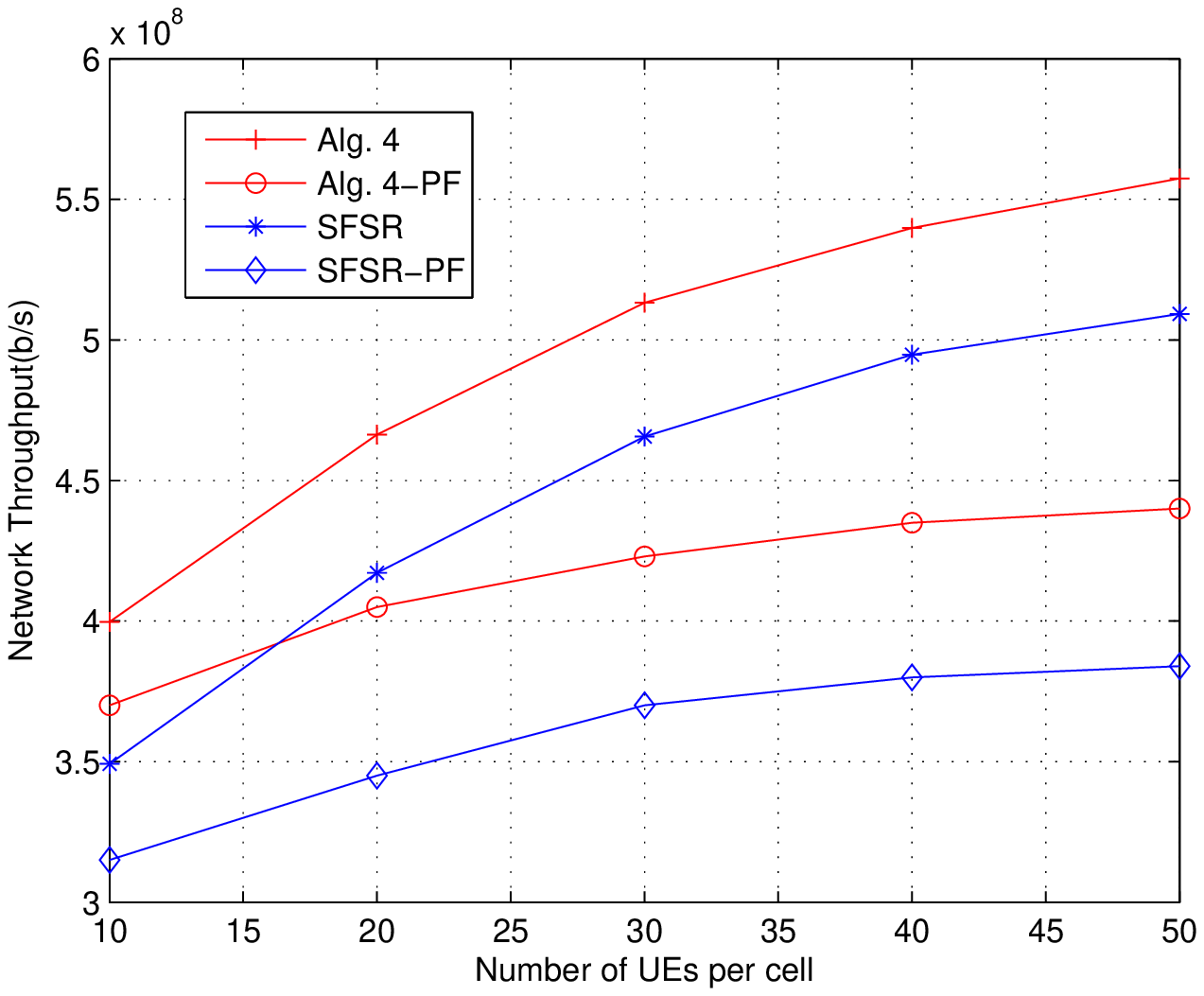}}
\subfigure[The variance of average rate] {\includegraphics[width=3in]{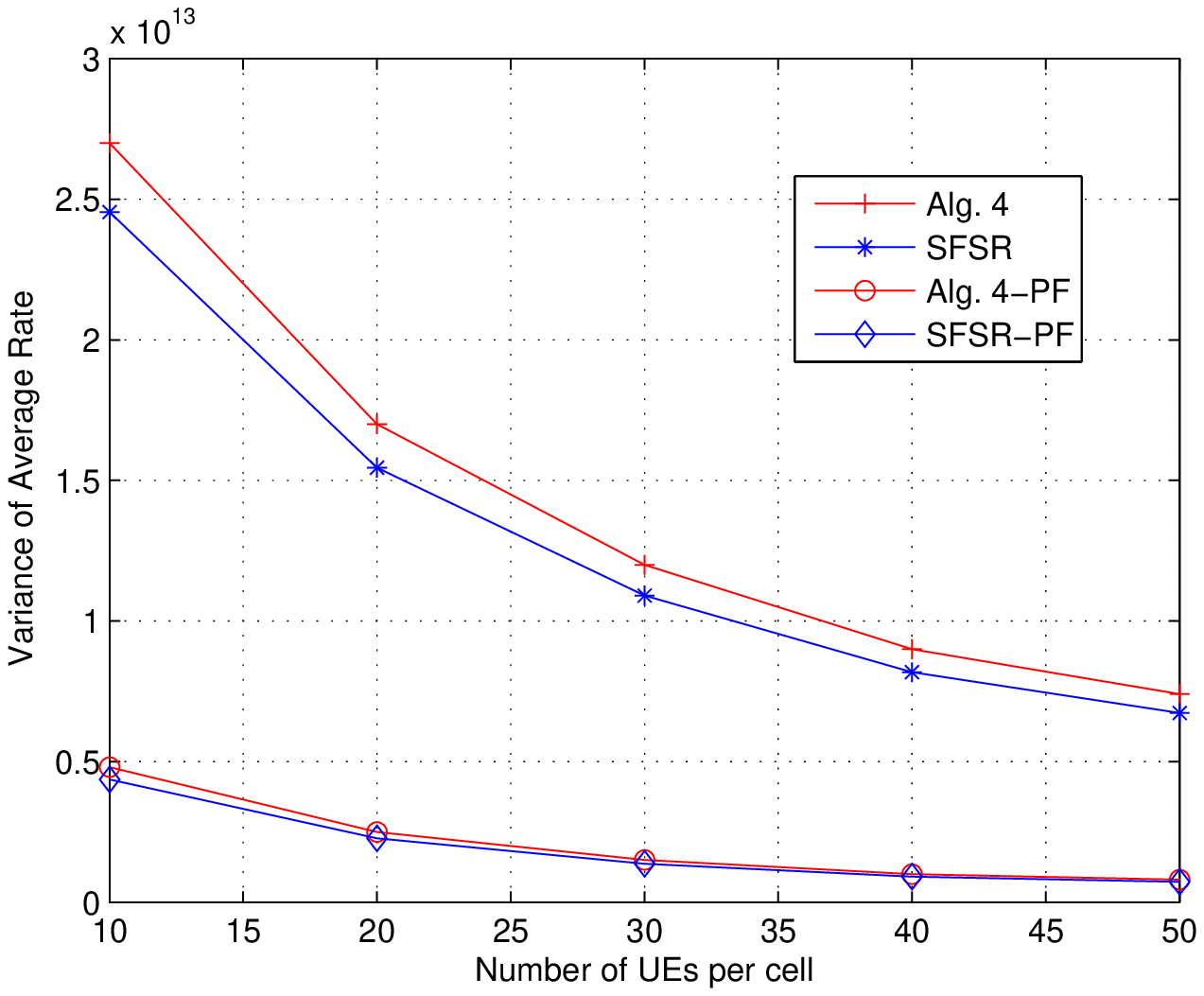}}
\caption{The comparison of network throughput and the variance of average rate in static scenario.}
\label{fig5}
\end{figure*}
So the throughput gap between Alg. 4 and Alg. 4-PF in Fig. 9(a) is obviously larger than that in Fig. 8(a). Correspondingly, the variance gap in Fig. 9(b) is obviously larger than that in Fig. 8(b). Moreover, we notice that the network throughput of Alg. 4-PF outperforms SFSR when the number of UEs is small,  while SFSR has a better performance when the number of UEs is large. This is because we only serve the UEs with superior channels for SFSR. However, for Alg. 4-PF, the number of UEs with weak channels increases with the total number of UEs, which costs more physical resource and thus inhibits the increase of the network throughput.

\section{Conclusion}
In this paper, we maximized the weighted sum-rate for the downlink transmission in Multi-cell Multi-association OFDMA heterogeneous networks. A joint user association, subchannel allocation, and power allocation optimization problem was formulated. To solve the optimization problem, we  divided it into two subproblems. The first subproblem is  joint user association and subchannel allocation for a fixed power allocation, whose globally optimal solution can be obtained by the Hungarian algorithm. The second subproblem is power allocation for  fixed  user association and subchannel allocation, which can be transformed to a series of convex problems by the DCA method. To further reduce its complexity, we proposed a simplified but efficient algorithm to solve these  problems, decreasing the number of iterations up to almost $90$ percent  off than the conventional Lagrange Dual method. Simulation results showed that our  joint optimization algorithm achieves a better performance compared with the existing algorithms. We also extend the problem into the multi-antenna receiver case and the proportional fairness case, respectively. Our algorithm  performs well when the UEs apply multi-antenna receivers, and it helps achieve a good tradeoff between throughput and fairness when considering proportional fairness.


\begin{thebibliography}{99}
\bibitem{86de1}
F. Boccardi, R. Heath, A. Lozano, T. Marzetta, and P. Popovski, ``Five disruptive technology directions for 5G,'' \emph{IEEE Communications Magazine}, vol. 52, no. 2, pp. 74--80, Feb. 2014.
\bibitem{86de4}
T. Nakamura, S. Nagata, A. Benjebbour, Y. Kishiyama, T. Hai, S. Xiaodong, Y. Ning, and L. Nan, ``Trends in small cell enhancements in LTE advanced,'' \emph{IEEE Communications Magazine}, vol. 51, no. 2, pp. 98--105, Feb. 2013.
\bibitem{five}
K. Okino, T. Nakayama, C. Yamazaki, H. Sato, and Y. Kusano, ``Pico cell range expansion with interference mitigation toward LTE-Advanced heterogeneous networks,'' in \emph{Proc. ICC Workshops}, Kyoto, Japan, Jun. 2011,
pp. 1--5.
\bibitem{86de7}
Huawei, ``System simulations for downlink co-channel interference scenario (R1-130507),'' in \emph{3GPP TSG RAN WGC Meetig-72}, Jan. 2013.

\bibitem{2A}
Z. WU, W. Xie, F. Yang, and Q. Bi, ``User association in heterogeneous network with dual connectivity and constrained backhaul,''   \emph{China Communications}, vol. 13, no. 2, pp. 11--20, Feb. 2016.
\bibitem{2B}
U. Siddique, H. Tabassum, E. Hossain, and D. Kim ``Channel access-aware user association with interference coordination in two-tier downlink cellular networks,'' \emph{IEEE Trans. Veh. Technol.}, vol. 65, no. 7, pp. 5579--5594, Jul. 2016.
\bibitem{2C}
A. Benmimoune, F. Khasawneh, and M. Kadoch, ``User association for HetNet small cell networks,'' in \emph{Proc. IEEE Conf. FiCloud}, Aug. 2015, pp. 113--117.
\bibitem{six}
N. Y. Ermolova and B. Makarevitch, ``Performance of practical subcarrier allocation schemes for OFDMA,'' in \emph{Proc. IEEE PIMRC 2007}, Sep. 2007, pp. 1--4.

\bibitem{seven}
Q. Li, R. Q. Hu, Y. Qian, and G. Wu, ``A proportional fair radio resource allocation for heterogeneous cellular networks with relays,'' in \emph{Proc. IEEE GLOBECOM}, Anaheim, CA, 2012, pp. 1--5.

\bibitem{HetNets}
Y. Li, M. Sheng, Y. Sun, and Y. Shi, ``Joint optimization of BS operation, user association, subcarrier assignment, and power allocation for Energy-Efficient HetNets,'' \emph{IEEE J. Sel. Areas Commun.}, DOI 10.1109/JSAC.2016.2600581.




\bibitem{84}
Y. Chen, J. Li, W. Chen, Z. Lin, and B. Vucetic, ``Joint User Association and Resource Allocation in the Downlink of Heterogeneous Networks,'' \emph{IEEE Trans. Veh. Technol.},  vol. 65, no. 7, pp. 5701--5706, Jul. 2016.
\bibitem{84de6}
P. Bhat, S. Nagata, L. Campoy, I. Berberana, T. Derham, G. Liu, X. Shen, P. Zong, and J. Yang, ``LTE-advanced: an operator perspective,'' \emph{IEEE Communications Magazine}, vol. 50, no. 2, pp. 104--114, Feb. 2012.
\bibitem{37}
W. Yu, G. Ginis, and J. J. Cioffi, ``Distributed multiuser power control for digital  subscriber lines,'' \emph{IEEE J. Sel. Areas Commun.}, vol. 20, pp. 1105--1115, 2002.
\bibitem{84de7}
D. Lopez-Perez, X. Chu, and J. Zhang, ``Dynamic downlink frequency and power allocation in OFDMA cellular networks,'' \emph{IEEE Trans. Commun.}, vol. 60, no. 10, pp. 2904--2914, Oct. 2012.


\bibitem{1B}
H. Tabassum, Z. Dawy, and M. Alouini, ``Sum rate maximization in the uplink of multi-cell ofdma networks,'' in \emph{Proc. IEEE Conf. IWCMC}, Jul. 2011, pp. 1152--1157.


\bibitem{1A}
S. Kim, J. Kwon, and J. Lee, ``Sum-rate maximization for multicell OFDMA systems,'' \emph{IEEE Trans. Veh. Technol.}, vol. 64, no. 9, pp. 4158--4169, Sep. 2015.



\bibitem{eight}
J. Ghimire and C. Rosenberg, ``Resource allocation, transmission coordination and user association in heterogeneous networks: A flow-based unified approach,'' \emph{IEEE Trans. Wireless Commun}., vol. 12, no. 3, pp. 1340--1351, Mar. 2013.

\bibitem{twentyone}
D. West, \emph{Introduction to Graph Theory}. Prentice Hall, 2001.

\bibitem{twentytwo}
H. W. Kuhn, ``The Hungarian method for the assignment problem," \emph{Naval Res. Logistics}, vol. 52, no. 1, pp. 7--21, Feb. 2005.


\bibitem{liuyong}
Y. Liu and W. Chen, ``Limited-feedback-based adaptive power allocation and subcarrier pairing for OFDM DF relay networks with diversity," \emph{IEEE Trans. Veh. Technol.}, vol. 61, no. 6, pp. 2559--2571, Jul. 2012.

\bibitem{kwan}
D. W. K. Ng, E. S. Lo, and R. Schober, ``Energy-efficient resource allocation for secure OFDMA systems," \emph{IEEE Trans. Veh. Technol.}, vol. 61, no. 6, pp. 2572--2585, Jul. 2012.




\bibitem{101}
P. Coucheney, C. Touati, and B. Gaujal, ``Fair and efficient user-network association algorithm for multi-technology wireless networks,'' in \emph{Proc. IEEE Conf. INFOCOM}, Apr. 2009, pp. 2811--2815.
\bibitem{86de12}
L. Wang, W. Chen, and J. Li, ``Congestion aware dynamic user association in heterogeneous cellular network: A stochastic decision approach,'' in \emph{Proc. IEEE Conf. ICC}, Jun. 2014, pp. 2636--2640.



\bibitem{new}
K. L. Hoffman and T.K. Ralphs, \emph{Integer and Combinatorial Optimization}. Encyclopedia of Operations Research and Management Science, 2011.
\bibitem{ten}
T. M. Cover and J. A. Thomas, \emph{Elements of Information Theory}. New York: Wiley, 1991.




\bibitem{81}
H. H. Kha, H. D. Tuan, and H. H. Nguyen, ``Fast global optimal power allocation in wireless networks by local D.C. programming,'' \emph{IEEE Trans. Wireless Commun.}, vol. 11, no. 2, pp. 510--515, Feb. 2012.

\bibitem{tang}
H. Tang, W. Chen, and J. Li, ``Robust Joint Source-Relay-Destination Design Under Per-Antenna Power Constraints,'' \emph{IEEE Trans. Signal Process.}, vol. 63, no. 10, pp. 2639--2649, May 2015.


\bibitem{90}
A. Khabbazibasmenj and S. A. Vorobyov, ``Robust adaptaive beamforming for general-rank signal model with positive semi-definite constraint via POTDC,'' \emph{IEEE Trans. Signal Process.}, vol. 61, no. 23, pp. 6103--6117, Dec. 2013.




\bibitem{83}
M. Chiang, C.W. Tan, D. Palomar, D. O¡¯Neill, and D. Julian, ``Power Control by Geometric Programming,'' \emph{IEEE Trans. Wireless Commun.}, vol. 6, no. 7, pp. 2640--2651, July 2007.



\bibitem{3716de16}
R. D. Yates, ``A framework for uplink power control in cellular radio systems,'' \emph{IEEE J. Sel. Areas Commun.}, vol. 13, no. 7, pp. 1341--1347, Sep. 1995.


\bibitem{bisec}
R. Cendrillon, J. W. Huang, M. Chiang, and M. Moonen, ``Autonomous spectrum balancing for digital subscriber lines,'' \emph{IEEE Trans. Signal Process.}, vol. 55, no. 8, pp. 4241--4257, Aug. 2007.





\bibitem{fair2de1}
A. Jalali, R. Padovani, and R. Pankaj, ``Data throughput of CDMA-HDR a high efficiency-high data rate personal communication wireless system,'' in \emph{Proc. VTC}, 2000, pp. 1854--1858.

\bibitem{fair2de2}
F. P. Kelly, ``Charging and rate control for elastic traffic,'' \emph{Eur. Trans. Telecommun.}, vol. 8, no. 1, pp. 33--37, 1997.

\bibitem{22de14}
H. Kim and Y. Han, ``A proportional fair scheduling for multicarrier transmission systems,'' \emph{IEEE Commun. Letters}, vol. 9, pp. 210--212, Mar. 2005.

\bibitem{22de16}
A. L. Stolyar, ``On the asymptotic optimality of the gradient scheduling algorithm for multiuser throughput allocation,'' \emph{Operations Research}, vol. 53, no. 1, pp. 12--25, 2005.
\end{thebibliography}
\end{document}